 \newif\ifanswers
\newtheorem{thm}{Theorem}
\newtheorem{lem}[thm]{Lemma}
\newtheorem{prop}[thm]{Proposition}
\newtheorem{cor}[thm]{Corollary}
\newtheorem{rem}[thm]{Remark}
\newtheorem*{claim}{Claim}
\newcommand{\Z}{{\mathbb Z}} 
\newcommand{\Q}{{\mathbb Q}} 
\newcommand{\R}{{\mathbb R}} 
\newcommand{\Op}{\operatorname{Op}}
\newcommand{\noXi}{{\Xi}} 
\title[Scars for point scatterers]{Scarred eigenstates for arithmetic
  toral point scatterers} 
\begin{document}
\date{August 12, 2015}
\author{P\"ar Kurlberg}
\address{Department of Mathematics, KTH Royal Institute of Technology,
  SE-100 44
Stockholm, Sweden; e-mail: kurlberg@math.kth.se}
\author{Lior Rosenzweig}
\address{Department of Mathematics, KTH Royal Institute of Technology,
  SE-100 44
Stockholm, Sweden; e-mail: liorr@math.kth.se}

\begin{abstract}
  We investigate eigenfunctions of the Laplacian perturbed by a delta
  potential on the standard tori $\R^d/2 \pi\Z^d$ in dimensions
  $d=2,3$.
Despite {\em quantum ergodicity}  holding for
  the set of ``new'' eigenfunctions 
we show that there is {\em scarring} in the momentum
representation for $d=2,3$, as well as in the position representation
for $d=2$ (i.e., the  eigenfunctions fail to equidistribute in phase
space along an infinite subsequence of new eigenvalues.)
For $d=3$, scarred eigenstates are quite rare, but for $d=2$  scarring in the momentum representation is very common --- with
$N_{2}(x) \sim x/\sqrt{\log x}$ denoting the counting function for the
new eigenvalues below $x$, there are $\gg N_{2}(x)/\log^A x$
eigenvalues corresponding to momentum scarred eigenfunctions.

\end{abstract}
\thanks{P.K. and L.R.  were partially supported by grants from the
  G\"oran Gustafsson Foundation for Research in Natural Sciences and
  Medicine, and the Swedish Research Council (621-2011-5498).} 
\maketitle



 
\section{Introduction}

\label{sec:introduction}

A basic question in Quantum Chaos is the classification of quantum
limits of energy eigenstates of quantized Hamiltonians.  For example,
if the classical dynamics is given by the geodesic flow on a compact
Riemannian manifold $M$, the quantized Hamiltonian is given by the
positive Laplacian $-\Delta$ acting on $L^{2}(M)$.  With
$\{\psi_{\lambda}\}_{\lambda}$ denoting Laplace eigenfunctions giving
an orthonormal basis for $L^{2}(M)$, a quantum limit is a weak$^{*}$
limit of $|\psi_{\lambda}(x)|^{2}$ along any subsequence of
eigenvalues $\lambda$ tending to infinity.  More generally, given a
smooth observable, i.e.  a smooth function $f$ on the unit cotangent
bundle $S^{*}(M)$, its quantization is defined as a 
pseudo-differential operator $\Op(f)$, and one wishes to understand
possible limits of the distributions
$$
f \to \langle \Op(f) \psi_{\lambda}, \psi_{\lambda} \rangle
$$
on $C^{\infty}( S^{*}(M))$, as $\lambda \to \infty$.  If $M$ 
has negative curvature (``strong chaos''), the celebrated
Quantum Unique Ergodicity (QUE) conjecture by Rudnick and Sarnak
\cite{RS94} 
asserts that the only quantum limit is given by the uniform, or
Liouville, measure on $S^{*}(M)$.
Conversely, if the geodesic flow is integrable, many quantum
limits may exist and the eigenfunctions are said to exhibit
``scarring''.  E.g., if $M=\R^2/2 \pi\Z^2$ is a flat torus and $a \in \Z$,
then $\psi_{a}(x,y) = \cos( ax) \cos(  y)$ is an
eigenfunction with eigenvalue $a^{2}+1$, and clearly
$|\psi_{a}(x,y)|^{2} \overset{*} \to \cos^{2}( y)/2$ as $a \to
\infty$.  (For a partial classification of the  set of quantum limits
on $\R^2/2 \pi\Z^2$, see \cite{J97}.) 

Now, if the flow is ergodic (``weak chaos''), Schnirelman's theorem
\cite{Sh74,Ze87,CdV85} asserts {\em Quantum Ergodicity}, namely that
the only quantum limit, provided we remove a zero density subset of
the eigenvalues, is the uniform one. However, non-uniform quantum
limits may exist along the zero density subsequence of removed
eigenvalues.  Some interesting questions for quantum ergodic systems
are thus: are there scars?  If so, how large can the exceptional set
of eigenvalues be?  Can eigenfunctions scar in position space, i.e.,
is it possible that $|\psi_{\lambda}(x)|^{2}$, along some subsequence,
weakly tends to something other than $1/\operatorname{vol}(M)$?  We
shall address these questions for the set of ``new'' eigenfunctions of
the Laplacian on a torus perturbed by a delta potential.  The
perturbation has a very small effect on the classical dynamics ---
only a zero measure subset of the set of trajectories is changed
(hence there is no classical ergodicity), yet, as was recently shown
\cite{RuU12, KuU14, Y14}, quantum ergodicity holds for the set of new
eigenfunctions.
%
(We note that this is quite different from point scatterers on tori of
the form $\R^2/\Gamma$, for $\Gamma$ a generic rectangular lattice.
Here it was recently shown \cite{KuU-scars} that quantum ergodicity
does not hold; in fact almost all new eigenfunction exhibit strong
momentum scarring, cf. Section~\ref{sec:discussion}.)

\subsection{Toral point scatterers}
The point scatterer, or the Laplacian perturbed with a delta potential
(also known as a ``Fermi pseudopotential''), is a popular ``toy
model'' for studying the transition 
between chaos and integrability in quantum chaos. 
With $\mathbb{T}^{d} :=\mathbb{R}^{d}/2\pi\mathbb{Z}^{d}$ for $d=2$ or
$d=3$, let $\alpha \in \R$ denote the ``strength'' of a delta
potential placed at some point $x_{0}\in \mathbb{T}^d$; the formal
operator
\[
%
-\Delta+\alpha \cdot \delta_{x_{0}} 
\]
can then be realized using von Neumann's theory of self adjoint
extensions.  For $d=2,3$ there is a one parameter family of self
adjoint extensions $H_{\varphi}$, parametrized by an angle $\varphi
\in (-\pi,\pi]$, and the quantum dynamics we consider is generated by
$H_{\varphi}$.  For $d=3$ we will keep $\varphi$ fixed, but in order
to obtain a strong spectral perturbation for $d=2$ we will allow
$\varphi$ to slowly vary with the eigenvalue; in the physics literature
this is known as the ``strong coupling limit'',
cf. Section~\ref{sec:background} for more details.


The spectrum of
$H_{\varphi}$ consists of two types of eigenvalues: ``old'' and
``new'' eigenvalues. The old ones are eigenvalues of the unperturbed
Laplacian, i.e., integers that can be represented as sums of $d$
integer squares, and the old eigenfunctions are the corresponding
eigenfunctions of the unperturbed Laplacian that vanish at
$x_{0}$. The set of new eigenvalues, denoted by $\Lambda$, are all of
multiplicity $1$, and interlace between the old eigenvalues.  In fact,
the new eigenvalues are solutions of the spectral equation
\begin{equation}
\sum_{n\in\mathcal N_{d}}r_{d}(n)
\left(
\frac{1}{n-\lambda}-\frac{n}{n^{2}+1}
\right)=
C
\label{eq:wk_cpl}
\end{equation}
where 
\[
r_{d}(n):=\sum_{\substack{\xi\in\mathbb{Z}^{d}, \ |\xi|^{2}=n}}1
\]
is the number of ways to represent $n$ as a sum of $d$ squares, 
$$
\mathcal{N}_{d} := \{ n \in \Z : r_{d}(n)>0 \},
$$
and 
$$
C = C(\varphi) := \tan(\varphi/2) \cdot \sum_{n} r_{d}(n)/(n^{2}+1).
$$
is allowed vary with $\lambda$ when $d=2$.

For $\lambda \in \Lambda$ a new eigenvalue, the corresponding
eigenfunction is then given by  the Green's functions $G_{\lambda}= \left(
  \Delta+\lambda \right)^{-1}\delta_{x_{0}}$,
with   $L^{2}$-expansion
$$
  G_{\lambda}(x)=
-\frac{1}{4\pi^{2}}\sum_{\xi\in\mathbb{Z}^{d}}\frac{\mbox{exp}
  (-i\xi\cdot x_{0})}{|\xi|^{2}-\lambda}e^{i\xi\cdot  x}.
$$
We remark that the delta potential introduces singularities at
$x_{0}$; as $x \to x_{0}$, we have the asymptotic (for some $a \in
\R$)
$$
G_{\lambda}(x) = 
\begin{cases}
a
\left( \cos(\varphi/2) \cdot \frac{\log |x-x_{0}|}{2 \pi} + \sin
(\varphi/2) \right) + o(1)
& \text{for $d=2$,}
\\
a
\left( \cos(\varphi/2) \cdot \frac{-1}{ 4 \pi |x-x_{0}|} +
  \sin(\varphi/2) \right) + o(1) 
& \text{for $d=3$.}
\end{cases}
$$
Note that $\varphi=\pi$ gives the unperturbed Laplacian; in
what follows we will assume that $\varphi \in (-\pi,\pi)$.

We can now formulate our first result, namely that some eigenfunctions 
strongly localize in the momentum representation in dimension
three. For $l\in\mathcal{N}_{3}$ let \[
\Omega(l):=\left\{
{\xi}/{|\xi|}\in\mathbb{S}^{2}:\xi\in\mathbb{Z}^{3},|\xi|^{2}=l
\right\}
\]
be the projection of the lattice points of distance $\sqrt l$ from the
origin onto the unit sphere, and 
let $\delta_{\Omega(l)}$ denote the distribution defined by 
\[
\delta_{\Omega(l)}(f):=\frac{1}{r_{3}(l)}
\sum_{\substack{\xi\in\mathbb{Z}^{3}\\|\xi|^{2}=l}}f
\left(
\frac{\xi}{|\xi|}
\right)
, \quad \quad \text{for $f \in C^{\infty}(\mathbb{S}^{2})$}
\]
(we can view it as the uniform probability measure on the points of
$\Omega(l)$), and let $\nu$ denote the uniform measure on
$\mathbb{S}^{2}$.
\begin{thm}
\label{thm:3dscar}
Let $\mathbb{T}^{3}=\mathbb{R}^{3}/2\pi\mathbb{Z}^{3}$,
$x_{0}\in\mathbb{T}^{3}$ and let $\Lambda$ be the set of ``new''
eigenvalues of the point scatterer, that is
$\Lambda=\rm{Spec}(H_{\varphi})\setminus\mathcal{N}_{d}$.
For $\lambda\in\Lambda$, let $g_{\lambda}\in
L^{2}(\mathbb{T}^{2})$ denote the $L^{2}$-normalized eigenfunction
with eigenvalue $\lambda$. Then for any $l\in\mathcal{N}_{3}$ there
exists an infinite subset
$\Lambda_{l}\subset\Lambda$, and $a\in[\frac{1}{2},1]$ such that for any 
pure momentum observable $f\in C^{\infty}(\mathbb{S}^{2})$
\begin{equation}
  \label{eq:scar_mmnt_dim3}
  \lim_{\lambda\in\Lambda_{l}}\langle
\Op(f)g_{\lambda},g_{\lambda}\rangle=a \cdot \delta_{\Omega(l)}(f)+
(1-a) \cdot \nu(f)
\end{equation}
That is, the pushforward of the quantum limit along this
sequence to the momentum space is a convex sum of the normalized sum
of delta measures on the finite set $\Omega(l)$, and the uniform
measure, with at least half the mass on the singular part --- there is
{\bf strong} scarring in the momentum representation.
\end{thm}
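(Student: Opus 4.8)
The plan is to construct the sequence $\Lambda_l$ by a pigeonhole/Diophantine argument forcing a single shell $|\xi|^2 = l$ to dominate the $L^2$-mass of $G_\lambda$, and then to compute the momentum quantum limit directly from the explicit $L^2$-expansion of $G_\lambda$. First I would recall that for a pure momentum observable $f \in C^\infty(\mathbb{S}^2)$ the matrix element has the clean form
\[
\langle \Op(f) g_\lambda, g_\lambda \rangle = \frac{1}{\norm{G_\lambda}_{2}^{2}} \sum_{\xi \in \Z^3} \frac{1}{(|\xi|^2 - \lambda)^2}\, f\!\left(\frac{\xi}{|\xi|}\right) + o(1),
\]
since $\Op(f)$ acts diagonally (up to lower order) on the Fourier basis with symbol $f(\xi/|\xi|)$, and the Fourier coefficients of $G_\lambda$ have squared modulus $\frac{1}{16\pi^4}(|\xi|^2-\lambda)^{-2}$. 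Writing $\norm{G_\lambda}_2^2 = \frac{1}{16\pi^4}\sum_{n \in \mathcal N_3} r_3(n)/(n-\lambda)^2$, everything reduces to understanding the weights $w_\lambda(n) := \frac{r_3(n)/(n-\lambda)^2}{\sum_{m} r_3(m)/(m-\lambda)^2}$, which form a probability measure on $\mathcal N_3$; the assertion becomes that along $\Lambda_l$ one has $w_\lambda(l) \to a \in [1/2,1]$ while the remaining mass $1 - w_\lambda(l)$ equidistributes over shells in the sense that $\sum_{n \neq l} w_\lambda(n)\, \delta_{\Omega(n)}(f) \to (1-a)\nu(f)$.

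Next I would carry out the construction of $\Lambda_l$. Fix the target shell $l$. Because the new eigenvalues $\lambda$ interlace the old ones and the spectral equation \eqref{eq:wk_cpl} has a solution in each gap $(n, n')$ between consecutive elements of $\mathcal N_3$, I can choose, for a suitable increasing sequence of gaps $(n_k, n_k')$ with $n_k \equiv l$ modulo a growing modulus (or more precisely with $n_k - l$ hit exactly, i.e. take the gap immediately above $l$ translated by large elements of $\mathcal N_3$ — the cleanest route is to take $\lambda$ very close to $l$ itself but in a gap far out so that no \emph{other} small denominator $n - \lambda$ occurs), new eigenvalues $\lambda_k \in \Lambda$ with $|\lambda_k - l| =: \epsilon_k \to 0$ and with a lower bound $\min_{n \neq l}|n - \lambda_k| \gg 1$. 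Then $w_{\lambda_k}(l) = \frac{r_3(l)/\epsilon_k^2}{r_3(l)/\epsilon_k^2 + O(1)} \to 1$; this would give $a = 1$. To get general $a \in [1/2,1]$ one instead tunes $C = C(\varphi)$ (recall $\varphi$ is fixed in $d=3$, so one really tunes which gap and how close to $l$ the root sits) so that $r_3(l)/\epsilon_k^2$ and $\sum_{n \neq l} r_3(n)/(n-\lambda_k)^2$ stay in a fixed ratio; the lower bound $a \geq 1/2$ will come from the fact that one cannot make $\epsilon_k$ too large without the root leaving the gap adjacent to $l$. I would make this precise by analyzing the spectral function $F_\lambda := \sum_n r_3(n)(\frac{1}{n-\lambda} - \frac{n}{n^2+1})$ near $\lambda = l$: it has a simple pole with residue $-r_3(l)$, and the distance from $l$ to the root is governed by $r_3(l)/(C - F_l^{\mathrm{reg}})$ where the regularized sum $F_l^{\mathrm{reg}}$ grows; choosing the gap index so that this distance times the ``background'' $\sum_{n\neq l} r_3(n)/(n-l)^2$ lands at a prescribed value yields the desired $a$.

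Finally, for the equidistribution of the complementary mass I would invoke the pointwise/averaged Weyl law for sums of three squares: the measures $\delta_{\Omega(n)}$ on $\mathbb{S}^2$ converge weakly to $\nu$ for $n$ along a density-one subsequence (Duke, Golubeva--Fomenko), and more usefully, smooth averages $\sum_{n \leq X} r_3(n) \delta_{\Omega(n)}(f) \sim \nu(f) \sum_{n \leq X} r_3(n)$ hold with a power saving. Since the weights $w_{\lambda_k}(n)$ for $n \neq l$ are, after removing the spike at $l$, essentially proportional to $r_3(n)/(n-\lambda_k)^2 \asymp r_3(n)/(n-l)^2$ — a slowly varying, summable weight concentrated on no single shell — a summation-by-parts argument against the equidistribution-on-average statement shows $\sum_{n \neq l} w_{\lambda_k}(n)\delta_{\Omega(n)}(f) = (1 - w_{\lambda_k}(l))(\nu(f) + o(1))$. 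Combining the two pieces gives \eqref{eq:scar_mmnt_dim3}. The main obstacle I anticipate is the number-theoretic input in this last step: one must control the tail $\sum_{n > T} r_3(n)/(n-l)^2 \delta_{\Omega(n)}$ uniformly, i.e. upgrade the known equidistribution for sums of three squares to hold against the specific weight $1/(n-l)^2$ with an effective error term, handling the exceptional shells where $\Omega(n)$ is far from equidistributed (e.g. $n$ with very small $r_3(n)$, or $n$ in thin arithmetic progressions); everything else is bookkeeping with the interlacing property and the explicit form of $G_\lambda$.
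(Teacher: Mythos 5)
Your construction of $\Lambda_l$ does not work, and this is the central gap. You propose finding an infinite sequence of new eigenvalues $\lambda_k$ with $|\lambda_k - l| \to 0$, but $l$ is a \emph{fixed} integer in $\mathcal{N}_3$, and the new eigenvalues interlace with $\mathcal{N}_3$: there is exactly one new eigenvalue in each gap between consecutive old eigenvalues, so at most two new eigenvalues lie near $l$. You cannot accumulate infinitely many $\lambda \in \Lambda$ at $l$. Your parenthetical attempts to repair this ("$n_k \equiv l$ modulo a growing modulus", "take the gap immediately above $l$ translated by large elements of $\mathcal{N}_3$", "take $\lambda$ very close to $l$ itself but in a gap far out") are mutually inconsistent: an integer $m$ far from $l$ has a different set of lattice points on its sphere, so $\Omega(m) \neq \Omega(l)$ and the weight concentrating on the shell of $m$ would produce scarring along $\delta_{\Omega(m)}$, not $\delta_{\Omega(l)}$.

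The missing idea is the identity $r_3(4m) = r_3(m)$ together with $\Omega(4m) = \Omega(m)$: one takes $\mathcal{M}_l = \{4^k l : k \geq 0\}$ and, for each large $m = 4^k l$, produces a new eigenvalue $\lambda$ with $|\lambda - m| \ll m^{-1/4+\epsilon}$. The smallness of $|\lambda-m|$ is \emph{not} automatic from interlacing; it requires the nearby-zeros lemma (Lemma~\ref{lem:nearby-zeros}) together with the lower bound $H_m'(m) \gg m^{1/2-\epsilon}$, which comes from Siegel's theorem ($r_3(m') \gg (m')^{1/2-\epsilon}$ for a bounded translate $m'$ of $m$). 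Your sketch has no version of this quantitative input, and without it you have no control on $|\lambda - m|$, hence no control on $w_\lambda(m)$. Similarly, the bound $a \geq 1/2$ does not come from tuning $C(\varphi)$ (which is fixed for $d=3$) but from the inequality $|m-\lambda|^2 \sum_{n \neq m} r_3(n)/|n-\lambda|^2 \leq (1+o(1))\, r_3(l)$, which forces the complementary mass to be at most that on the shell. Finally, the equidistribution of the complementary mass is indeed handled as you suspect, via the Weyl-sum cancellation $W_Y(n) \ll n^{1/2-\alpha}$; the uniformity against the weight $1/(n-\lambda)^2$ that you flag as the "main obstacle" is in fact the routine part (a short-range/long-range split of the sum compared against $H_m'(m) \gg m^{1/2-\epsilon}$), while the construction of $\Lambda_l$ that you treat as straightforward is where the actual content lies.
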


In dimension $2$, when $\varphi$ is fixed, \eqref{eq:wk_cpl} is often
referred to as the ``weak coupling limit'', and almost all new
eigenvalues remain close to the old eigenvalues (cf. \cite{RuU14}).
To find a model which exhibits level repulsion, Shigehara \cite{Shi94}
and later Bogomolny and Gerland \cite{BGS01} considered another
quantization, sometimes referred to as the ``strong coupling
limit''. One way to arrive at this quantization is by considering
energy levels in a window around a given eigenvalue: e.g., for
$\eta \in (131/146,1 )$ the new eigenvalues are defined to be solutions of
\begin{equation}
\sum_{\substack{n\in\mathcal N_{d}\\|n-n_{+}(\lambda)|<n_{+}(\lambda)^{\eta}}}r_{d}(n)
\left(
\frac{1}{n-\lambda}-\frac{n}{n^{2}+1}
\right)=
0,
\label{eq:strng_cpl}
\end{equation}
where  $n_{+}(\lambda)$ is  the smallest
element of $\mathcal{N}_{2}$ that is larger than $\lambda$.
It is convenient to consider both couplings simultaneously; we may do this
by letting
\begin{equation}
\label{def:cpl_fn}
F(\lambda)=
\begin{cases}
\text{Constant} & \mbox{(weak coupling)}\\
\sum\limits_{\substack{n\in\mathcal{N}_{2}\\|n-n_{+}(\lambda)|\geq
        n_{+}(\lambda)^{\eta}}}\hspace{-25pt}r_{2}(n)\left(
\frac{1}{n-\lambda}-\frac{n}{n^{2}+1}
\right) & \mbox{(strong coupling)}
\end{cases}
\end{equation}
and then rewriting the spectral equation as
\begin{equation}
\sum_{n\in\mathcal N_{2}}r_{2}(n)
\left(
\frac{1}{n-\lambda}-\frac{n}{n^{2}+1}
\right)=F(\lambda)
\label{eq:bth_cpl}
\end{equation} 

Our next result, valid for both the weak and strong coupling limit in
dimension two, is the existence of a zero density subsequence
exhibiting non-uniform quantum limits in the momentum, as well as the
position representation.
\begin{thm}\label{thm:2dscar}
  Let $\mathbb{T}^{2}=\mathbb{R}^{2}/2\pi\mathbb{Z}^{2}$,
  $x_{0}\in\mathbb{T}^{2}$ and let $\Lambda$ be the set of new
  eigenvalues of the point scatterer, that is
  $\Lambda=\rm{Spec}(H_{\varphi})\setminus\mathcal{N}_{2}$.
  For $\lambda\in\Lambda$, let $g_{\lambda}\in
  L^{2}(S^{*}\mathbb{T}^{2})$ be the $L^{2}$-normalized eigenfunction
  with eigenvalue $\lambda$.
  \begin{enumerate}
  \item \label{thm:2dscar_momentum}
 There exists an infinite subset
  $\Lambda_{m}'\subset\Lambda$ such that the pushforward of the quantum
  limit along this sequence to momentum space has  positive
  mass on a
  finite number of 
  atoms (``strong momentum scarring'').
\item \label{thm:2dscar_position}
There exists an infinite subset $\Lambda_{p}'\subset\Lambda$ such that the
  pushforward of the quantum limit along this sequence to position
  space has a nontrivial non-zero Fourier coefficient (``position scarring'').
  \end{enumerate}


Furthermore, we may take $\Lambda_{p}'=\Lambda_{m}'$.
\end{thm}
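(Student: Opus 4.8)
The plan is to construct the subsequence $\Lambda_m' = \Lambda_p'$ explicitly by controlling the spectral equation \eqref{eq:bth_cpl} near a carefully chosen "resonant" configuration of lattice points, mimicking the mechanism behind Theorem~\ref{thm:3dscar} but with the extra input needed to get position scarring for $d=2$. First I would recall the standard formula for $\langle \Op(f) g_\lambda, g_\lambda\rangle$ in terms of the Fourier coefficients of $G_\lambda$: up to normalization, the weight attached to a lattice point $\xi$ with $|\xi|^2 = n$ is proportional to $1/(n-\lambda)^2$, and the $L^2$-normalizing constant is (essentially) $\sum_{n} r_2(n)/(n-\lambda)^2$, whose size is governed by the derivative of the left-hand side of \eqref{eq:bth_cpl}. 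The point is that if $\lambda$ is trapped between two old eigenvalues $n_-<n_+$ that each carry many representations (large $r_2$), and moreover $\lambda$ can be pushed very close to one of them while the coupling function $F(\lambda)$ stays bounded, then the Fourier mass of $g_\lambda$ concentrates on the finitely many lattice points $\xi$ with $|\xi|^2 \in \{n_-, n_+\}$ — this gives momentum scarring with atoms at the directions $\xi/|\xi|$.

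Next, to get \emph{position} scarring simultaneously I would exploit that the position-space mass of $g_\lambda$ near such a $\lambda$ is, to leading order, governed by $|G_\lambda(x)|^2$ with the Fourier expansion dominated by the terms from $n_-$ and $n_+$; the resulting function on $\torus$ has a nonzero Fourier coefficient at frequency $\xi - \xi'$ whenever $|\xi|^2 = n_\pm$, $|\xi'|^2 = n_\mp$ (a "cross term"). So the construction must ensure not only that $n_\pm$ have many representations, but that these two shells are genuinely distinct (so a nonzero difference frequency appears) and that the cross terms survive after averaging — which is automatic along a fixed $\lambda$ since we are not averaging, we are taking a pointwise/weak limit. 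The key arithmetic step is therefore: produce an infinite sequence of pairs $(n_-, n_+)$ of consecutive elements of $\mathcal{N}_2$, both with $r_2$ bounded below (e.g. both in a fixed congruence class forcing, say, $r_2 \ge 8$), with $n_+ - n_-$ not too large, and then verify that for each such pair the spectral equation \eqref{eq:bth_cpl} has a solution $\lambda$ as close as we wish to $n_+$ (resp. $n_-$) — this follows from the intermediate value theorem applied to the left-hand side, which runs from $-\infty$ to $+\infty$ as $\lambda$ crosses from just above $n_-$ to just below $n_+$, combined with a bound showing $F(\lambda)$ is $O(1)$ (for strong coupling this is a standard off-diagonal tail estimate; for weak coupling $F$ is literally constant).

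The remaining analytic step is to show that along such $\lambda$, as $\lambda \to n_+$ (say) within the pair, the ratio of the "resonant" Fourier mass (from shells $n_-$ and $n_+$) to the total $L^2$-norm stays bounded below, uniformly as the pairs march off to infinity. Here one estimates $\sum_{n \ne n_-, n_+} r_2(n)/(n-\lambda)^2$ against $r_2(n_+)/(n_+-\lambda)^2$; since we may take $\lambda$ arbitrarily close to $n_+$ for each fixed pair, the single term $r_2(n_+)/(n_+-\lambda)^2$ dominates, so in fact essentially \emph{all} the mass localizes on shell $n_+$. One then reads off both conclusions: the pushforward to momentum space is (a limit of) $\delta_{\Omega(n_+)}$, which has atoms; and the pushforward to position space is (a limit of) $|{\sum_{|\xi|^2 = n_+}} e^{-i\xi\cdot x_0} e^{i\xi\cdot x}|^2$ suitably normalized, which is a nonconstant trigonometric polynomial — pick a pair so that its nonconstant Fourier coefficients do not all vanish, which again holds for all but finitely many of the constructed pairs.

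I expect the main obstacle to be the simultaneous control in the position representation: one must choose the resonant shell $n_+$ so that the limiting position density is provably nonconstant, and, more subtly, check that passing to a further subsequence to make the momentum and position limits \emph{both} exist does not destroy either property — this is where the claim "we may take $\Lambda_p' = \Lambda_m'$" really bites, since a priori $\Op(f)$ for position observables and for momentum observables behave differently under weak limits. The resolution is that once almost all Fourier mass sits on a single shell $n_+ = n_+(\lambda)$ varying with $\lambda$, one passes to a subsequence along which $\Omega(n_+(\lambda))$ converges (as a measure on $\mathbb{S}^1$, using compactness) \emph{and} along which the normalized shell sums converge in position space; both limits are then forced to be non-uniform by the lower bound on $r_2(n_+)$ together with a (non-)vanishing check for the relevant Fourier coefficient, which one can guarantee by restricting the construction to a congruence class of $n_+$ chosen so that this coefficient is bounded away from zero.
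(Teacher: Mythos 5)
There is a genuine gap that makes the proposed strategy fail before the subtleties you worry about even come into play: you cannot ``take $\lambda$ arbitrarily close to $n_+$.'' The new eigenvalue in the interval $(n_-,n_+)$ is \emph{determined} by the spectral equation \eqref{eq:bth_cpl}; the intermediate value theorem only gives its existence, not the freedom to place it. In fact, if $r_2(n_-)$ and $r_2(n_+)$ are \emph{both} large, the spectral equation near $\lambda=n_-+\delta$ reads $-r_2(n_-)/\delta + r_2(n_+)/(h-\delta)+O(1)=0$ with $h=n_+-n_-$, which forces $\delta \approx r_2(n_-)h/(r_2(n_-)+r_2(n_+))$, i.e.\ $\lambda$ sits near the middle of the interval (weighted by the representation numbers), not near an endpoint; so your construction never produces an eigenvalue close to either shell, and ``essentially all the mass localizes on shell $n_+$'' is false. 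More generally, Lemma~\ref{lem:nearby-zeros} shows exactly the opposite phenomenon to what you need: the solution $\delta_0$ always satisfies $\delta_0^2 H_m'(\lambda) \lesssim r_2(m)$, so the diagonal term $r_2(m)/\delta_0^2$ is at least comparable to the off-diagonal sum, but there is no mechanism to make it \emph{dominate} completely. The paper exploits this by going the other way: it sieves for $m=n^2+1$ with $r_2(m)$ \emph{bounded} and a nearby $m+3$ with $r_2(m+3)$ much \emph{larger} (Lemma~\ref{lem:almost-prime-elts-in-S}); the large $H_m'$ then forces $|\lambda-m|\le\gamma$ (Proposition~\ref{prop:2dim-bdd-norm}), giving a quantum limit with at least roughly half of the momentum mass on the finite set $\Omega(m)$, and no more than half.

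The position-scarring argument has a second gap: you invoke that ``the nonconstant Fourier coefficients do not all vanish\dots for all but finitely many of the constructed pairs'' without any mechanism. Once the quantum limit has mass $\le 1$ on the shell $\Omega(m)$ and mass $\ge 1/2$ off it, there is real cancellation to control, and a general $n_+$ with large $r_2$ offers no handle. The paper's sieve is specifically designed so that $m=n^2+1$, which guarantees the lattice points $(\pm n,-1)$ and $(\pm n,+1)=(\pm n,-1)+(0,2)$ are both on the circle $|\xi|^2=m$; the cross-term at frequency $w=(0,2)$ then contributes $2/\delta^2$ to $\langle\Op(e_w)g_\lambda,g_\lambda\rangle$, and Corollaries~\ref{cor:sum-over-negs} and \ref{cor:small-neg-contr} are exactly the delicate cancellation estimates showing the remaining (sign-indefinite) terms are $O(1/\delta)+o(1)$. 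Without an analogous arithmetic structure forcing a specific short difference vector to land in the shell, the position scar has no provable lower bound.
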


We remark that for $d=2$ almost half the mass is carried on the
singular part in the momentum representation --- for any $\epsilon>0$
the singular part has mass at least $1/2-\epsilon$
(cf. Remark~\ref{rem:half-mass-singular}).
%

In order to quantify how common scars are we need some further
notation.  For $d=2,3$, let $N_{d}(x)$ denote the counting function
(``Weyl's law'') for the number of new eigenvalues $\lambda \leq x$.
For $d=3$, $N_{3}(x) \sim x$ and, as the eigenvalues that give rise to
scars are essentially powers $4^{l}$, the exceptional subset is of
size $x^{o(1)}$ and thus very sparse.  For $d=2$, $N_{2}(x) \sim
x/\sqrt{\log x} = x^{1 -o(1)}$, and our construction of eigenfunctions
that scar both in position and momentum is a subset with counting
function of size $ x^{1/2-o(1)}$ --- hence  fairly rare.
However, if we restrict ourselves to scarring only in the momentum
representation, we can use some recent results by Maynard
\cite{Maynard-dense-clusters} to show that scarred eigenvalues are in
fact quite common.
\begin{thm}
\label{thm:large-density-scars}
In dimension two there exists a subset $\Lambda''\subset\Lambda$ such
that the pushforward of the quantum limit along $\Lambda''$ scars in
momentum space, and 
$$
|\{ \lambda \in \Lambda'' : \lambda \leq x \}|
\gg
x/(\log x)^{A}
$$
for some $A>1$.
\end{thm}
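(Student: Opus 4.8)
The plan is to combine the explicit form of the momentum distribution of $g_\lambda$ with Maynard's theorem on bounded gaps between primes in a fixed residue class. Writing $\xi/|\xi| \in \mathbb{S}^1$ for the direction of $\xi \in \mathbb{Z}^2\setminus\{0\}$, the restriction to pure momentum observables of the distribution $f \mapsto \langle \Op(f) g_\lambda, g_\lambda\rangle$ is integration against the probability measure
\[
\mu_\lambda \;=\; \frac{1}{Z_\lambda}\sum_{n\in\mathcal N_2}\frac{r_2(n)}{(n-\lambda)^2}\,\delta_{\Omega(n)},
\qquad
Z_\lambda \;=\; \sum_{n\in\mathcal N_2}\frac{r_2(n)}{(n-\lambda)^2},
\]
$\delta_{\Omega(n)}$ being the uniform probability measure on the $r_2(n)$ directions of the shell $|\xi|^2 = n$. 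A Fej\'er-kernel estimate shows that if $\mu_\lambda$ carries mass $\geq c$ on a set of at most $K$ atoms, then $|\widehat{\mu_\lambda}(k)| \geq c'(c,K) > 0$ for some integer $k$ with $1 \leq |k| \leq N(c,K)$. Hence if along an infinite $\Lambda'' \subset \Lambda$ one has such uniform $c, K$, then after discarding at most a bounded (negligible) proportion to fix $k$, every weak-$*$ limit of $\{\mu_\lambda\}_{\lambda\in\Lambda''}$ has a nonzero low-frequency Fourier coefficient and so differs from the uniform measure --- the eigenfunctions along $\Lambda''$ scar in momentum. It thus suffices to produce $\gg x/(\log x)^A$ eigenvalues $\lambda \leq x$ with $\mu_\lambda$ carrying mass $\geq c$ on at most $K$ atoms, $c>0$ and $K$ absolute.

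Such eigenvalues arise from clusters of primes. Suppose $p < q$ are primes $\equiv 1 \pmod 4$ with $q - p \leq C_0$ and no element of $\mathcal N_2$ strictly between; then $p,q$ are consecutive in $\mathcal N_2$ with $r_2(p)=r_2(q)=8$, and by monotonicity of the left side of \eqref{eq:bth_cpl} on $(p,q)$ there is a new eigenvalue $\lambda=\lambda(p,q)\in(p,q)$. With $\delta_0 = \lambda-p$, $\delta_1 = q-\lambda$ (so $\delta_0+\delta_1 \leq C_0$) and $S(\lambda) = Z_\lambda - 8/\delta_0^2 - 8/\delta_1^2$, the bound $\delta_0\delta_1 \leq C_0^2/4$ gives $8/\delta_0^2 + 8/\delta_1^2 \geq 16/(\delta_0\delta_1) \geq 64/C_0^2$, so $\mu_\lambda$ puts mass
\[
\frac{8/\delta_0^2+8/\delta_1^2}{8/\delta_0^2+8/\delta_1^2+S(\lambda)} \;\geq\; \frac{64/C_0^2}{64/C_0^2+S(\lambda)}
\]
on the $16$ directions of $\Omega(p)\cup\Omega(q)$. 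Thus $\lambda(p,q)$ meets the criterion above (with $K=16$, $c$ depending only on $C_0$) as soon as $S(\lambda)=O(1)$; keeping only clusters whose left endpoints are pairwise more than $C_0$ apart makes the eigenvalues distinct. (A variant: allow the $\mathcal N_2$-successor of $p$ to have large $r_2$ instead of imposing $q-p\leq C_0$; then $\lambda$ hugs $p$, forcing $8/\delta_0^2$ to be large, so the conclusion survives a weaker tail bound --- useful if the count of genuine prime clusters is thin.)

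Now apply Maynard's theorem on dense clusters of primes in arithmetic progressions \cite{Maynard-dense-clusters} to the primes $\equiv 1\pmod 4$ (level of distribution $1/2$): there is an absolute $C_0$ and, for a suitable $A$, at least $\gg x/(\log x)^A$ integers $n\leq x$ with two primes $\equiv 1\pmod 4$ in $[n,n+C_0]$; for each, take the two smallest, $p<q$. We must exclude (i) clusters with $(p,q)\cap\mathcal N_2\neq\emptyset$, and (ii) clusters with $S(\lambda(p,q))>T$ for a constant $T$; both are controlled by one estimate. Bounding $S(\lambda)\ll\sum_{|j|>C_0}r_2(p+j)/j^2$, summing over clusters and interchanging, one is led to estimate $\sum_p \mathbf 1[p \text{ begins a cluster}]\,r_2(p+j)$, which counts lattice points $(a,b)$ with $a^2+b^2 \leq x+|j|$ for which $a^2+b^2-j$ and its translate by the gap are both primes $\equiv 1\pmod 4$; a Selberg-type upper bound sieve gives $\ll \tfrac{x}{(\log x)^2}\mathfrak S(j)$ with $\sum_j\mathfrak S(j)/j^2<\infty$. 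Hence $\sum_{\text{clusters}}S(\lambda)\ll x/(\log x)^2$, and likewise $\sum_{\text{clusters}}\sum_{0<i<C_0}\mathbf 1[p+i\in\mathcal N_2]\ll x/(\log x)^2$; provided $A<2$, both exceptional sets have size $o(x/(\log x)^A)$. Removing them and invoking the first two steps proves the theorem.

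The crux is this last step. The clusters are sparser than the primes by a power of $\log x$, yet $S(\lambda)$ has average only $O(1)$ over \emph{all} primes, so a first-moment bound using merely ``$p$ prime'' cannot gain enough: one must genuinely use both primality conditions carried by the cluster indicator (this is what yields the extra $(\log x)^{2-A}$) and, hand in hand, know Maynard's count with an exponent $A$ in the admissible range. Should that range prove too small, one runs the argument through the large-$r_2$ variant of the second step, trading one primality condition for a divisibility condition. Either way, reconciling the sparsity of the construction with the required control of the tail $S(\lambda)$ is the one delicate point; the Fej\'er criterion, the local analysis of \eqref{eq:bth_cpl}, and the sieve bound are routine.
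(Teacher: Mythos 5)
Your overall strategy --- locate new eigenvalues near particular $m\in\mathcal N_2$ using local analysis of the spectral equation, then feed Maynard's theorem in to get $\gg x/(\log x)^A$ such $m$ --- is the right one, and your reduction to producing a probability measure with mass $\geq c$ on $\leq K$ atoms (hence a nonzero low-frequency Fourier coefficient) is fine. But the specific mechanism you run --- twin primes $p,q\equiv 1\bmod 4$ consecutive in $\mathcal N_2$ with $q-p\leq C_0$, plus a sieve bound on the tail $S(\lambda)$ --- does not close, and the gap is genuine rather than cosmetic.

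The difficulty is the exponent mismatch you yourself flag. Maynard's dense-clusters theorem (part \eqref{prop:maynard-implication} of Theorem~\ref{thm:sieve}) gives a \emph{lower} bound $\gg x/(\log x)^k$ on the number of $n$ with at least two of $k$ admissible linear forms prime, where $k$ must be taken large. Your Selberg/Markov argument bounds the number of clusters with $S(\lambda)>T$ by $\ll x/(T(\log x)^2)$. For the good clusters to be comparable to the total you would need $T\gg(\log x)^{k-2}\to\infty$, so you only force $S(\lambda)\ll(\log x)^{k-2}$ on a positive proportion --- not $O(1)$. Equivalently: Maynard gives a lower bound matching $x/(\log x)^k$, while your sieve upper bound sits at $x/(\log x)^2$, and nothing unconditional puts the cluster count up near $x/(\log x)^2$. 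The requirement ``no element of $\mathcal N_2$ strictly between $p$ and $q$'' compounds the problem --- the density of $\mathcal N_2$ is only $\sim 1/\sqrt{\log x}$, so even-integer members of $\mathcal N_2$ in $(p,q)$ cannot be ruled out by a residue-class sieve and would place a close-by shell whose contribution to $S(\lambda)$ you cannot control.

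The paper's proof sidesteps all of this, and it is worth seeing why, because your ``variant'' remark is almost there. The key is that \emph{no separate tail bound is needed}: Lemma~\ref{lem:nearby-zeros}, applied to the rewritten spectral equation $f(\delta)=r_2(m)/\delta$ with $f(\delta)=H_m(m+\delta)-r_2(m)\,m/(m^2+1)$, yields a solution $\delta_0$ with $\delta_0^2\leq r_2(m)/B$ where $B=\min_{|\delta|\leq\gamma}f'(\delta)$; since $f'$ varies by $1+O(\gamma)$ on this window and $H'_m(\lambda)\approx f'(\delta_0)$, this immediately gives $H'_m(\lambda)\leq(1+O(\gamma))\,r_2(m)/\delta_0^2$. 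That is, the tail of $\|G_\lambda\|^2$ is bounded \emph{relative to} the $m$-contribution automatically, because $\lambda$ solves the spectral equation --- it never needs to be $O(1)$ in absolute terms. Scarring then only needs (i) $r_2(m)$ bounded, so $\Omega(m)$ is a bounded set of atoms, and (ii) $r_2(m+h)\gg R\cdot r_2(m)$ for some $h\ll 1$, so that $f'(0)\gg r_2(m+h)$ is large enough to force $\delta_0$ small. Both are arranged by Lemma~\ref{lem:one-good-tuple} and Proposition~\ref{prop:r2-boost}: one builds $a_1<\cdots<a_H$ in $\mathcal N_2$, within a bounded window, with $r_2(a_{i+1})>R\,r_2(a_i)$ for some $i$, then applies Maynard to the linear forms $F_i(n)=a_i n+1$; from primes $p=F_i(n)$, $p'=F_j(n)$ one gets $m'=a_i p'$ and $m=a_j p$ with $m-m'=a_j-a_i$ bounded, $r_2(m')\geq R\,r_2(m)$, and $r_2(m)$ bounded. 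So Maynard is used to realize a ``small $r_2$ next to huge $r_2$'' configuration, not to produce twin primes; the exponent $A=H$ is whatever Maynard needs, and no sieve control of $S(\lambda)$ enters. Your variant (``let the successor of $p$ have large $r_2$'') is exactly the right idea --- to repair the proposal you should drop the twin-prime cluster entirely and run this variant, using the spectral-equation argument in place of the tail sieve.
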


\subsection{Discussion}
\label{sec:discussion}
%
In \cite{Se90} \v{S}eba proposed quantum billiards on rectangles with
irrational aspect ratio, perturbed with a delta potential, as a
solvable singular model exhibiting wave chaos; in particular that the
level spacings should be given by random matrix theory (GOE).
%
\v{S}eba and {\.Z}yczkowski later noted \cite{SeZy91} that the level
spacings were not consistent with GOE, in particular large gaps are
much more frequent (essentially having a Poisson distribution tail.)
Shigehara subsequently found \cite{Shi94}  that level repulsion
is only present in the strong coupling limit.  Recently Rudnick and
Uebersch\"ar proved \cite{RuU14}, in dimension two, that the level
spacing for the weak coupling limit is the same as the level spacings
of the unperturbed Laplacian (after removing multiplicities).  This in
turn is conjectured to be Poissonian, and we note that a natural
analogue of the prime $k$-tuple conjecture for integers that are sums
of two squares can be shown to imply Poisson gaps \cite{SOTS}.  In
\cite{RuU14} the three dimensional case was also investigated and the
mean displacement between new and old eigenvalues was shown to equal
half the mean spacing.


In \cite{RuU12}, Rudnick and Uebersch\"ar proved a position space
analogue of Quantum Ergodicity for the new eigenfunctions: there
exists a full density subset of the new eigenvalues such that as
$\lambda \to \infty$ along this subset, the only weak limit of
$|\psi_\lambda(x)|^{2}$ is the uniform measure on $\mathbb{T}^2$.
Further, in \cite{KuU14} the first author and Ueberschär proved an
analogue of Quantum Ergodicity: there exists a full density subset of
the new eigenvalues such that the only quantum limit along this subset
is the uniform measure on the full phase space (i.e., the unit
cotangent bundle $S^{*}(\mathbb{T}^{d})$.)  This result was later
shown to hold also for $d=3$ by Yesha \cite{Y14}; already in
\cite{Y13} he showed that {\em all} eigenfunctions equidistribute in
the position representation.

For irrational tori, Keating, Marklof and Winn proved in \cite{KeMW10}
that there exist non-uniform quantum limits (in fact, strong
momentum scarring was already
observed  in \cite{BKW03}), assuming a spectral 
clustering condition implied by the old eigenvalues having Poisson
spacings (which in turn follows from the Berry-Tabor conjecture.)
Recently the first author and Ueberschär unconditionally showed
\cite{KuU-scars} that for tori having diophantine aspect ratio,
essentially {\em all} new eigenfunctions strongly scar in the momentum
representation. 
Recently Griffin showed \cite{Gr15} that similar results hold for
Bloch eigenmodes (i.e., non-zero quasimomentum) for periodic point
scatterers in three dimensions, provided a certain Diophantine
condition on the aspect ratio holds.




\subsection{Scarring and QUE for some other models}
For Quantum Ergodic systems almost all eigenfunctions equidistribute,
but in general not much is known about the (potential) subset of
expectional eigenfunctions giving non-uniform quantum limits.
%
In some cases Quantum Unique Ergodicity is known to hold; notable examples are
Hecke eigenfunctions on modular surfaces \cite{Lin06,S10} and
``quantized cat maps'' \cite{KuRu00, Kel10}. For these models there
exist large commuting families of ``Hecke symmetries'' that also
commute with the quantized Hamiltonian, and it is then natural to
consider joint eigenfunctions of the full family of commuting
operators.
%
Other examples arise when the underlying classical dynamics is
uniquely ergodic, QUE is then ``automatic'', e.g., see
\cite{Ro06,MR00}.

On the other hand there are Quantum Ergodic systems exhibiting
scarring. 
For example, if Hecke symmetries are not taken into account, quantized cat
maps can have very large spectral degeneracies. Using this, Faure,
Nonnenmacher and de-Bievre (\cite{FNdB03}) proved that scars occur in
this model.  For higher dimensional analogues of cat maps, Kelmer
found a scar construction not involving spectral degeneracies, but
rather certain invariant rational isotropic subspaces
\cite{Kel10,Kel11}.

We also note that Berkolaiko, Keating, and Winn has shown
\cite{BKW03,BKW04} that simultaneous momentum and position scarring
can occur for quantum star graphs, e.g., for certain star graphs with
a fixed (but arbitrarily large) number of bonds, there exists quantum
limits supported only on two bonds.

Another way to construct scars is to use ``bouncing ball quasimodes''.
For example, functions of the form $\psi_n(x,y) = f(x) \sin( ny)$ are
approximate Laplace eigenfunctions on a stadium shaped domain (say
with Dirichlet boundary conditions), and semiclassically localize on
vertical periodic trajectories.  Hassell showed \cite{H10} that for a
generic aspect ratio stadium, there are few eigenvalues near $n$ and
hence 
$\psi_n$ overlaps strongly with an eigenfunction $\phi_n$ with
eigenvalue near $n$, which then also must partially localize on
vertical periodic trajectories.  The number of ``bouncing ball 
eigenfunctions'' having eigenvalue at most $E$ grows (at most) as
$E^{1/2+o(1)}$, to be compared with the Weyl asymptotic $c \cdot E$;
hence these scarred eigenstates are fairly rare.
In \cite{BSS97,T97,LBK12} the asymptotic behaviour of sets of bouncing
ball eigenfunctions for some ergodic billiards were considered.
Interestingly, for the stadium billiard it was argued that the number
of scarred bouncing ball eigenfunctions, with eigenvalue at most $E$,
are much more numerous, namely of order $ E^{3/4}$ (again to be
compared the Weyl asymptotic $c \cdot E$.)  In fact, in \cite{BSS97}
it was argued that given any $\delta \in (1/2,1)$, there exists a Sinai
type billiard whose bouncing ball eigenfunction count 
is of order $c_{\delta} \cdot E^{\delta}$.



\subsection{Outline of the proofs}
\label{sec:outline-proofs}
The proofs are based on finding new eigenvalues $\lambda$ that are
quite near certain old eigenvalues. After rewriting equation
\eqref{eq:bth_cpl} as
\begin{equation}
  \label{eq:bth_cpl_2}
\frac{r_{d}(m)}{m-\lambda}-\frac{m}{m^{2}+1}+H_{m}(\lambda)=0,
\end{equation}
where
\[
H_{m}(\lambda) :=
\sum_{\substack{n\neq
    m\\n\in\mathcal{N}_{d}}}r_{d}(n)
\left(
\frac{1}{n-\lambda}-\frac{n}{n^{2}+1}
\right)-F(\lambda),
\]
we show that for any $m\in\mathcal{N}_{d}$ there exists a new
eigenvalue $\lambda$ such that $|m-\lambda|
\ll\sqrt{r_d(m)/H_{m}'(m)}$ (though it should be emphasized that we do not know
whether $\lambda >m$ or $\lambda<m$).  
We then find a sequence of integers $m$ such that both $r_{d}(m)$ and
$\sqrt{r_{d}(m)/H_{m}'(m)}$ are bounded, and thus get a control on the
distance of a new eigenvalue from these $m$.  (To find such $m$ we use the lower bound
sieve methods when $d=2$; for $d=3$ we find integers $m$ for which
the representation number $r_{3}(m)$ is very small.)  We conclude by using an
explicit description for the relevant eigenfunctions to compute the
limits in the theorems.

The paper is organized as follows: In Section \ref{sec:background} we
set the necessary background for the point scatterer model, then give some
number theoretic background, and in
Section~\ref{sec:preliminaries} we prove some
auxiliary analytic and
number theoretic results needed in the proofs of our main theorems. In
Sections \ref{sec:proof-theor-refthm:3} and \ref{sec:proof-2dim} we
prove Theorems \ref{thm:3dscar} and \ref{thm:2dscar}, and
Section~\ref{sec:version-2} contains the proof of
Theorem~\ref{thm:large-density-scars}. 
\subsection*{Acknowledgements} We would like to thank Ze\'ev Rudnick and
Henrik Uebersch\"ar for helpful discussions about this work.

 \section{Background}
 \label{sec:background}
 In this section we briefly review some results and definitions about
 point scatterers and give a short number theoretic background.
\subsection{Point scatterers on the flat torus}
We begin with the point scatterers, and recall the definition and
properties of the quantization of observables (see  \cite{RuU12,KuU14} for more
 details; further background can be found in \cite{U14,Y14}.)
 \subsubsection{Basic definitions and properties}
 \label{sec:basic-defin-prop}
For $d=2,3$ we consider the restriction of the Laplacian $-\Delta$ on 
\[
D_{0}:=C^{\infty}_{0}(\mathbb{T}^{d}\setminus\{x_{0}\})
\]
The restriction is symmetric though not self-adjoint, but by von
Neumann's theory of self adjoint extensions there exists a
one-parameter family of self-adjoint extensions; for
$\varphi\in(-\pi,\pi]$ there exists a self-adjoint extension
$H_{\varphi}$, where the case $\varphi=\pi$ corresponds to the
unperturbed Laplacian. The spectrum of $H_{\varphi}$ consists of two
types of eigenvalues and eigenfunctions:
\begin{enumerate}
\item Eigenvalues of the unperturbed Laplacian, and the corresponding
  eigenfunctions that vanish at $x_{0}$. The multiplicities of the new
  eigenvalues are reduced by 1, due to the constraint of vanishing at
  $x_{0}$.
\item New eigenvalues $\lambda\in\mathbb{R}$ satisfying the equation
  \begin{equation}
    \label{eq:evalue}
  \sum_{n\in\mathcal{N}_{d}}  r_{d}(n)
  \left(
\frac{1}{n-\lambda}-\frac{n}{n^{2}+1}
\right)
=c_{0}\tan
\left(
\frac{\varphi}{2}
\right)
  \end{equation}
\end{enumerate}
For $\lambda\in\mathbb{R}$ satisfying \eqref{eq:evalue}, the corresponding
Green's function
\begin{multline}
  \label{def:greens function}
  G_{\lambda}(x,x_{0})=
  \left(
\Delta+\lambda
\right)^{-1}\delta_{x_{0}}=\\
-\frac{1}{4\pi^{2}}\sum_{\xi\in\mathbb{Z}^{d}}\frac{\mbox{exp}
  (-i\xi\cdot x_{0})}{|\xi|^{2}-\lambda}e^{i\xi\cdot
  x},\quad x\neq x_{0}     
\end{multline}
is an eigenfunction, and
\begin{equation}
  \label{eq:def-g-lambda}
 g_{\lambda}(x,x_{0}):=\frac{G_{\lambda}(x,x_{0})}{\|G_{\lambda}\|}=\frac{\displaystyle\
  \sum_{\xi\in\mathbb{Z}^{d}}\frac{\mbox{exp}(-i\xi\cdot
     x_{0})}{|\xi|^{2}-\lambda}e^{i\xi\cdot x}}{\displaystyle 
\left(
   \sum_{n\in\mathcal{N}_{d}}\frac{r_{d}(n)}{|n-\lambda|^{2}}
\right)^{1/2}
}   
\end{equation}
is an $L^{2}$-normalized eigenfunction.

\subsubsection{Strong coupling}
\label{sec:strong-coupling}
In \cite{RuU12} Rudnick and Ueberschär showed that for $d=2$ the set
of ``new'' eigenvalues ``clump'' with the Laplace eigenvalues, and in
fact the eigenvalue spacing distribution coincides with that of the
Laplacian. In \cite{Shi94} Shigehara, and in \cite{BGS01} Bogomolny,
Gerland and Schmit considered another type of quantization, with the
intent of finding a model that exhibits level repulsion. This
quantization is sometimes referred to as the ``strong coupling''
(compared to the ``weak coupling'' given by equation
\eqref{eq:evalue}). One way of arriving at this quantization is by
truncating the summation in \eqref{eq:evalue} outside an energy window
of size $O(\lambda^\eta)$ for any fixed $\eta>131/146$. This leads to
the following spectral equation for the new eigenvalues:
\begin{equation}
\label{eq:evalue_strong}
\sum_{\substack{n\in\mathcal{N}_{2}\\|n-n_+(\lambda)|< n_+(\lambda)^\eta}}  r_{2}(n)
  \left(
\frac{1}{n -\lambda}-\frac{n}{n^{2}+1}
\right)
=c_{0}\tan
\left(
\frac{\varphi}{2}
\right).
\end{equation}
\subsubsection{Quantization of observables} Given a smooth observable
$a(x,\xi)$ on $S^{*}(\mathbb{T}^{d})\simeq \mathbb{T}^{d}\times
\mathbb{S}^{d-1}$ we define the quantization of it as a
pseudo-differential operator $\Op(a):C^{\infty}(\mathbb{T}^{d})\to
C^{\infty}(\mathbb{T}^{d})$. We refer the reader to
\cite{KuU14} for details on the 2 dimensional case, and \cite{Y14} for
the 3 dimensional case. We are mainly interested in either pure
momentum, or pure position observables, that is $a(x,\xi)=a(\xi)\in
C^{\infty}(\mathbb{S}^{d-1})$, or $a(x,\xi)=a(x)\in
C^{\infty}(\mathbb{T}^{d})$ respectively; this considerably simplifies
the discussion of quantizing observables.  Namely, given $f(x)\in
C^{\infty}(\mathbb{T}^{d})$, the action of a pure position
observable  $a=a(x)\in C^{\infty}(\mathbb{T}^{d})$ is given by 
\begin{equation}
  \label{eq:def-quantization-pos}
(\Op(a)f)(x)=a(x)f(x),
\end{equation}
whereas the action of a pure momentum observable
$a=a(\xi)\in C^{\infty}(\mathbb{S}^{d-1})$ is given by
\begin{equation}
  \label{eq:def-quantization-mom}
(\Op(a)f)(x)=\sum_{v\in\mathbb{Z}^{d}}a
\left(
\frac{v}{|v|}
\right)\widehat{f}(v)e^{iv\cdot x};
\end{equation}
in particular, for pure momentum observables we have
\begin{equation}
  \label{eq:def-matrix-elem}
\langle\Op(a)f,f\rangle=\sum_{v\in\mathbb{Z}^{d}}a
\left(
\frac{v}{|v|}
\right)|\widehat{f}(v)|^{2}.
\end{equation}

\subsection{Number theoretic background}
\label{sec:numb-theor-backgr}
\subsubsection{Integers that are sums of $2$ or $3$ squares}
\label{sec:sums-d-squares}

We begin with a short summary about integers that can be represented as
sums of $d$ squares for $d=2$ or 3. 
\begin{description}[leftmargin=0pt,itemindent=0pt]
\item[Sums of 2 squares]
It is well known (e.g., see \cite{lattice-points-on-circles}) that
$r_{2}(n)$ is determined by the prime factorization of $n$.  If we
write 
$$n=2^{a_{0}}p_{1}^{a_{1}}\dots p_{r}^{a_{r}}q_{1}^{b_{1}}\dots
q_{l}^{b_{l}},
$$
 where the $p_{i}$'s are primes all $ \equiv 1\pmod4$,
and the $q_{i}$'s are primes all $\equiv 3\pmod4$, then $n$ is a sum of two
squares if and only if all the $b_{i}$ are even, and
$r_{2}(n)=4d(p_{1}^{a_{1}}\dots p_{r}^{a_{r}})$, where $d(\cdot)$ is
the divisor function.
\\

\item[Sums of 3 squares]
\label{sec:sums-3-squares}
 For $d=3$, any number $n$ that is not
of the form $n=4^{a}n_{1}$, where $4 \nmid n_{1}$ and
$n_{1}\not\equiv7\pmod8$ can be represented as a sum of 3
squares. Moreover, $r_{3} (4n)=r_{3}(n)$ for any $n\in\mathbb{Z}$, and
if we let $R_{3}(n)$  denote the number of primitive representation of
$n$ as a sum of 3 squares (that is the number of ways to write
$n=x^{2}+y^{2}+z^{2}$ with $x,y,z$ coprime), we can relate $r_{3}(n)$
to class numbers of quadratic imaginary fields as follows
(cf. \cite[Theorem 4, p. 54]{Gr_book}):
\begin{equation}
  \label{eq:3repsformula}
r_{3}(n)=\sum_{d^{2}|n}R_{3}(\frac{n}{d^{2}}), \quad
R_{3}(n)=\pi^{-1}G_{n}\sqrt{n}L(1,\chi_n),
\end{equation}
where 
$$
G_{n}=
\begin{cases}
  0 & n\equiv 0,4,7\pmod8\\
 16 & n\equiv 3\pmod8\\
 24 & n\equiv 1,2,5,6\pmod8
\end{cases}.
$$
and $\chi_n(m)=(-4n/m)$ is the Kronecker symbol. 
By a celebrated theorem of Siegel, for any $\epsilon > 0$, 
$L(1,\chi_n) \gg_{\epsilon} n^{-\epsilon}$
and thus, for $n \not \equiv 0,4,7 \mod 8$,
\begin{equation}
  \label{eq:siegel-bound}
r_{3}(n) \geq R_{3}(n) \gg_{\epsilon} n^{1/2-\epsilon}.
\end{equation}
%
Further, given an integer $n$ that is a 
sum of 3 squares, let 
\[
\Omega(n):=
\left\{ 
\frac{(x,y,z)}{\sqrt{n}}: (x,y,z) \in \Z^3,    x^{2}+y^{2}+z^{2} = n
\right\}\subset\mathbb{S}^{2}.
\]
Fomenko-Golubeva and Duke showed (see
\cite{golubeva-fomenko-lattice-points-on-spheres87, Du88}, or
\cite[Lemma 2]{DuSchu90}) that the sets $\Omega(n)$ equidistribute in
$\mathbb{S}^{2}$ as $r_3(n)\to\infty$ (or equivalently $n_1\to\infty$)
inside $\mathcal{N}_{3}$. Namely, there exists $\alpha>0$, such that
for any spherical harmonic $Y(x)$, there is significant cancellation
in the Weyl sum
\[
W_{Y}(n):=\sum_{\xi\in\Omega(n)}Y(\xi),
\]
in the sense that
\begin{equation}
  \label{eq:Weyl-sum-cancelation}
W_{Y}(n)\ll n_{1}^{1/2-\alpha} \ll n^{1/2-\alpha}
\end{equation}
where the implied constant is independent of $n$.
\end{description}
\subsection{Sieve method results}
\label{sec:sieve-method-results}
We list below some  sieve results that show the existence
of various infinite sequences of integers with bounded number of prime
divisors. We first recall a few definitions. A positive integer $n$ is
called {\it{$r$-almost prime}} if $n$ has at most $r$ prime
divisors. We denote by $P_{r}$ the set of all $r$-almost prime
integers. A finite set of polynomials $\mathcal F=
\left\{
F_{1}(x),\dots,F_{k}(x)
\right\}\subset\mathbb{Z}[x]
$ is called {\it{admissible}} if $F(x):=\prod_{i=1}^{k}F_{i}(x)$ has no
fixed prime divisors, that is the equation $F(x)\equiv0\pmod p$ has
less that $p$ solutions for any prime $p$.
 The followoing theorem combines results from \cite{halberstam-richert,FI10,Maynard-dense-clusters}:
\begin{thm}
\label{thm:sieve}
Let $F_{1}(x),\dots,F_{k}(x)\in\mathbb{Z}[x]$ ($k\geq1$) be a finite
admissible set of irreducible
polynomials, and let $F(x):=F_{1}(x)\cdots F_{k}(x)$. Let $G$ denote
the degree of $F$. Then,
\begin{enumerate}[leftmargin=25pt,itemindent=0pt]
\item 
  \label{thm:halberstam-richert}\cite[Theorem 10.4]{halberstam-richert}
 There exists an integer $R(k,G)$,
such that for any $r>R(k,G)$, as $x\to\infty$,
\[
\#
\left\{
n\in\mathbb{Z},n\leq x:F(n)\in P_{r}
\right\}\gg\frac{x}{\log^{k}x}
\]
\item 
 \label{thm:iwaniec-friedlander}\cite[Theorem 25.4]{FI10} If $k=1$,
 then one can take $R(k,G)=G+1$, and therefore as $x\to\infty$
\[
\#
\left\{
n\leq x:F(n)\in P_{G+1}
\right\}\gg\frac{x}{\log x}
\]
\item
\label{prop:maynard-implication}
\cite[Theorem~3.4]{Maynard-dense-clusters}
If $k$ is large enough, and $F_{1}(x),\dots,F_{k}(x)$ are all {\bf{\em{linear}}}
with positive coefficients, then as $x \to \infty$
$$
|\{ n \leq x : \text{ at least two of the $F_{i}(n)$, $1\leq i \leq
  k$, are prime} \}| \gg \frac{x}{(\log x)^{k}}
$$
\end{enumerate}
\end{thm}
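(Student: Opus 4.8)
The statement is a compilation of three known theorems, so the proof amounts to checking that each clause is an instance --- or a mild reformulation --- of the corresponding cited result. Throughout, write $F=F_{1}\cdots F_{k}$ with the $F_{i}$ irreducible, let $G=\deg F$, and for a prime $p$ put $\rho(p):=\#\{n\bmod p: F(n)\equiv 0\pmod p\}$; the admissibility of $\mathcal F$ is precisely the condition $\rho(p)<p$ for every prime $p$. By the prime ideal theorem applied to the fields $\Q[x]/(F_{i}(x))$ one has $\prod_{p<z}\bigl(1-\rho(p)/p\bigr)^{-1}\asymp(\log z)^{k}$, with the exponent $k$ independent of the coefficients of $F$, so the sequence $\mathcal A=(F(n))_{n\le x}$ is a sifting problem of dimension $\kappa=k$. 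Moreover $\mathcal A$ has level of distribution $D=x^{1-\varepsilon}$ for every $\varepsilon>0$, available at once from $\#\{n\le x: q\mid F(n)\}=\rho(q)x/q+O(\rho(q))$ upon summing the errors over $q<D$.

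Parts (1) and (2) are then consequences of the weighted sieve. Feeding the dimension $\kappa=k$ and the level $D=x^{1-\varepsilon}$ into Richert's weights, \cite[Theorem~10.4]{halberstam-richert} produces an integer $R(k,G)$ --- depending on $k$ and $G$ only, the dependence on $G$ entering through the size $F(n)\asymp n^{G}$ of the sifted elements relative to $D$ and through the finitely many ramified primes $p\mid\operatorname{disc}F$ (where $\rho(p)$ can be as large as $G$, but remains $<p$) --- such that $F(n)\in P_{r}$ for $\gg x/(\log x)^{k}$ integers $n\le x$ whenever $r>R(k,G)$. The exponent $k$ is exactly what one expects, as ``$F_{i}(n)$ is almost prime'' costs one codimension for each $i$. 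When $k=1$ the sieve dimension is $1$, where the Jurkat--Richert and Buchstab numerology is sharp; carrying out that numerology, as in \cite[Theorem~25.4]{FI10}, lets one take $R(1,G)=G+1$, so an irreducible $F$ of degree $G$ satisfies $F(n)\in P_{G+1}$ for $\gg x/\log x$ values $n\le x$.

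Part (3) instead uses the Maynard--Tao multidimensional sieve. For admissible linear forms $L_{1},\dots,L_{k}$ with positive leading coefficients one constructs nonnegative weights $w_{n}$ supported on those $n\le x$ for which $\prod_{i}L_{i}(n)$ has no small prime factor --- hence supported on a set of size $\asymp x/(\log x)^{k}$ --- the decisive input being that the ratio $\bigl(\sum_{n}w_{n}\sum_{i}\mathbf 1_{\{L_{i}(n)\ \mathrm{prime}\}}\bigr)\big/\sum_{n}w_{n}$ exceeds $m$ once $k$ is sufficiently large in terms of $m$. Taking $m=2$ yields $\gg x/(\log x)^{k}$ integers $n\le x$ for which at least two of the $L_{i}(n)$ are prime, which is precisely \cite[Theorem~3.4]{Maynard-dense-clusters}.

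The only point that is not a verbatim quotation is the uniformity built into part (1): that $R(k,G)$ may be chosen independently of the coefficients of $F$. This is why one records that the sieve dimension equals $k$ for \emph{every} admissible $F$ with $k$ irreducible factors, and that the finitely many ramified primes perturb the singular series by a factor bounded in terms of $G$; given these two facts, all error terms in \cite[Theorem~10.4]{halberstam-richert} are controlled uniformly for fixed $(k,G)$, and $R(k,G)$ is genuinely a function of $(k,G)$ alone. Parts (2) and (3) are direct citations and need nothing further. Hence the main ``obstacle'' here is purely bookkeeping, and there is no substantive difficulty once the three references are assembled.
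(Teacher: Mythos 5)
Your proposal is correct and takes essentially the same route as the paper: the paper gives no proof of this theorem, presenting it purely as an assembly of the cited results of Halberstam--Richert, Friedlander--Iwaniec, and Maynard, exactly as you do. Your added bookkeeping (sieve dimension $k$ via the prime ideal theorem, level of distribution $x^{1-\varepsilon}$, and uniformity of $R(k,G)$ in the coefficients of $F$) is standard and sound, so nothing further is needed.
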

\section{Auxiliary results}
\label{sec:preliminaries}
Before proceeding to the proofs of the main theorems, we begin with a
few auxiliary results. For the benefit of the reader we note that
\S{\ref{sec:nearby-zeros}} is relevant for all theorems, Lemma
\ref{lem:almost-prime-elts-in-S} is relevant for the proof of Theorem
\ref{thm:2dscar}, and Lemma
\ref{lem:one-good-tuple}, and Proposition \ref{prop:r2-boost} are relevant for the
proof of Theorem \ref{thm:large-density-scars}.

\subsection{Nearby zeros}
\label{sec:nearby-zeros}
The following simple result will be crucial in finding integers $m$ in
the old spectrum for which there exist a nearby new eigenvalue $\lambda$.
\begin{lem}
\label{lem:nearby-zeros}
  Let $I$ be a closed symmetric interval containing zero, and let $f$
  be $C^{1}$ function on $I$.  Let $A>0$  be a real
  number, and assume that $B := \min_{\delta \in
    I} f'(\delta) >0$.  If $\sqrt{A/B} \in I$ there exists $\delta_{0} \in [-
  \sqrt{A/B},  \sqrt{A/B}]$ such that
$$
f(\delta_{0}) = A/\delta_{0}.
$$
\end{lem}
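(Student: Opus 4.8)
The plan is to reduce the claim to an application of the intermediate value theorem (or, more conveniently, to a fixed-point statement). Consider the function $g(\delta) := \delta \cdot f(\delta) - A$ on $I$; a zero of $g$ away from $\delta = 0$ is exactly a solution of $f(\delta_0) = A/\delta_0$. First I would record the behavior of $g$ at $\delta = 0$: since $A > 0$, we have $g(0) = -A < 0$. Next I would estimate $g$ at the endpoints $\pm \sqrt{A/B}$, which by hypothesis lie in $I$. The key input is the lower bound $f'(\delta) \geq B$ for all $\delta \in I$, together with the fact that $I$ is symmetric and contains $0$, so the mean value theorem gives $f(\delta) \geq f(0) + B\delta$ for $\delta \geq 0$ and $f(\delta) \leq f(0) + B\delta$ for $\delta \leq 0$.

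Now evaluate at $\delta_+ := \sqrt{A/B} > 0$: then $\delta_+ f(\delta_+) \geq \delta_+(f(0) + B\delta_+) = \delta_+ f(0) + B\delta_+^2 = \delta_+ f(0) + A$, so $g(\delta_+) \geq \delta_+ f(0)$. Similarly at $\delta_- := -\sqrt{A/B} < 0$ we have $f(\delta_-) \leq f(0) + B\delta_-$, and multiplying by the negative number $\delta_-$ reverses the inequality: $\delta_- f(\delta_-) \geq \delta_-(f(0) + B\delta_-) = \delta_- f(0) + A$, so $g(\delta_-) \geq \delta_- f(0) = -\delta_+ f(0)$. Thus $g(\delta_+) + g(\delta_-) \geq 0$, so at least one of $g(\delta_+)$, $g(\delta_-)$ is $\geq 0$; call the corresponding endpoint $\delta_*$. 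Since $g$ is continuous on the closed interval between $0$ and $\delta_*$, with $g(0) = -A < 0$ and $g(\delta_*) \geq 0$, the intermediate value theorem produces a point $\delta_0$ strictly between $0$ and $\delta_*$ (in particular $\delta_0 \neq 0$ and $|\delta_0| \leq \sqrt{A/B}$) with $g(\delta_0) = 0$, i.e. $f(\delta_0) = A/\delta_0$, as desired.

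I do not anticipate a serious obstacle here; the proof is elementary. The one point requiring a little care is the sign flip when multiplying the bound on $f(\delta_-)$ by the negative quantity $\delta_-$, and the observation that one only needs $g \geq 0$ at one of the two endpoints (which is guaranteed because the two values sum to something nonnegative). It is also worth noting that the case $f(0) = 0$ is harmless: then both $g(\delta_\pm) \geq 0$ and the argument still goes through on either side. Strict positivity of $B$ is used only to ensure $\sqrt{A/B}$ makes sense and that the quadratic term $B\delta_\pm^2$ exactly absorbs $A$.
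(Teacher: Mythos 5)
Your proof is correct and rests on the same two ingredients as the paper's: the mean-value bound $f(\delta)\geq f(0)+B\delta$ for $\delta\geq 0$ (reversed for $\delta\leq 0$) and the intermediate value theorem applied to the continuous function $g(\delta)=\delta f(\delta)-A$, which is negative at $0$. The only organizational difference is that the paper first solves the linearized equation exactly, noting that $B\delta^2+f(0)\delta-A=0$ has a root of absolute value at most $\sqrt{A/B}$, and then transfers that root to a root of the true equation; you instead evaluate $g$ directly at $\pm\sqrt{A/B}$ and use the cancellation $g(\delta_+)+g(\delta_-)\geq 0$ to force one endpoint value to be nonnegative, a slightly tidier route to the same IVT conclusion.
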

\begin{proof}
Let $I^{+} = I \cap [0,\infty]$, and let $I^{-} = I \cap [-\infty, 0]$.
For $\delta \in I^{+}$, we have $f(\delta) \geq f(0) + B\delta$.
Similarly, for $\delta 
\in I^{-}$, $f(\delta) \leq f(0) + B\delta$.  Thus, since $A,B>0$, if
$$
f(0) + B \delta_{1} = A/\delta_{1}
$$
for $\delta_{1} \in I^{+}$, there exists $\delta_{0} \in
[0,\delta_{1}]$ such that $f(\delta_{0}) = A/\delta_{0}$.  Similarly, if
$f(0) + B\delta_{1} = A/\delta_{1} $ for $\delta_{1} \in I^{-}$,
there exists $\delta_{0} \in [\delta_{1},0]$ such that $f(\delta_{0})
= A/\delta_{0} $.

To conclude the proof it is enough to show that 
$$
f(0) + B\delta = A/\delta 
$$
has a solution in $[-\sqrt{A/B}, \sqrt{A/B}]$, but this is clear since
$
B \delta^{2} + f(0)\delta -A = 0
$
has at least one root $\delta_{1}$ for which $|\delta_{1}| \leq \sqrt{A/B}$.

\end{proof}

\subsection{Sequences of sums of two squares}
\label{sec:sequences-sums-two}

\begin{lem}
\label{lem:almost-prime-elts-in-S}
Given $\gamma \in (0,1/10)$ there exists an infinite set $\mathcal M_{\gamma} \subset \mathcal{N}_{2}$
with the following properties: $\forall m\in\mathcal{M}_{\gamma}$
$$
m=n^{2}+1
$$
for some $n \in \Z^+$, 
\begin{equation}
  \label{eq:r2mbound}
r_{2}( m)  \leq 32,
\end{equation}
and
\begin{equation}
  \label{eq:r2mplus4bound}
r_{2}( m+3)  \geq 10 r_{2}(m)/\gamma^{2}.
\end{equation}
\end{lem}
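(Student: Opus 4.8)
The strategy is to produce the set $\mathcal M_\gamma$ by exhibiting a positive-density (in the sieve sense) family of integers $n$ for which $n^2+1$ is forced to have few prime factors while $n^2+4 = (n^2+1)+3$ is forced to have many. First I would set up the polynomial family to be sieved. Take $F_1(x) = x^2+1$; the quantity $r_2(x^2+1) = 4 d(\text{odd part with primes } \equiv 1 \bmod 4)$ is at most $32$ precisely when the relevant odd part has at most $8$ divisors, which is guaranteed if $F_1(n)$ is an $r$-almost prime for a suitable fixed $r$ (bounding the number of prime divisors bounds the divisor function of the $1 \bmod 4$ part since each prime power $p^a$ contributes a factor $a+1$, and controlling total $\Omega(n)$ by being $P_r$ suffices to get a crude bound like $32$ after checking the $2$-adic and $3 \bmod 4$ contributions are harmless). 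Simultaneously I want $n^2+4$ to be divisible by many distinct primes $\equiv 1 \bmod 4$: note $n^2+4 = (n+2i)(n-2i)$ over $\Z[i]$, and any odd prime $p \mid n^2+4$ with $p \nmid 2$ has $-4 \equiv -1 \cdot 4$ a square mod $p$, hence $p \equiv 1 \bmod 4$. So I would impose congruence conditions $n \equiv a_j \pmod{p_j}$ for a fixed finite list of primes $p_1,\dots,p_t \equiv 1 \bmod 4$ (each $p_j \mid a_j^2+4$), forcing $p_1 \cdots p_t \mid n^2+4$, which makes $r_2(n^2+4) \geq 4 \cdot 2^t = 4 d(p_1\cdots p_t)$. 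Choosing $t$ large enough (depending on $\gamma$, roughly $2^t \gtrsim 80/\gamma^2$) secures \eqref{eq:r2mplus4bound} once $r_2(m) \leq 32$.

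The remaining point is to run the almost-prime sieve along the arithmetic progression defined by these congruences. Restricting $n$ to the residue class $n \equiv a \pmod{Q}$, $Q = p_1\cdots p_t$, write $n = Q\ell + a$ and apply Theorem~\ref{thm:sieve}\eqref{thm:halberstam-richert} (or \eqref{thm:iwaniec-friedlander}) to the single polynomial $G(\ell) := F_1(Q\ell + a) = (Q\ell+a)^2 + 1$, which is irreducible of degree $2$; I must first check admissibility, i.e.\ that $G$ has no fixed prime divisor — this is standard for $x^2+1$-type polynomials after the linear substitution, the only prime to worry about being $2$ and those $p_j$, and one chooses the $a_j$ so that $a_j^2+1 \not\equiv 0 \pmod{p_j}$ (possible since $p_j$ has at most two roots of each of $x^2+1$ and $x^2+4$, so avoid the former while hitting the latter — this needs $p_j \geq 7$, which costs nothing). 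The sieve then yields $\gg x/\log^{?} x$ values of $\ell \leq x$ with $G(\ell) \in P_r$ for fixed $r = r(Q)$, hence infinitely many, giving the desired infinite set $\mathcal M_\gamma = \{ (Q\ell+a)^2+1 : \ell \text{ admissible}\}$.

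The main obstacle, and the step requiring genuine care rather than bookkeeping, is the \textbf{interplay between the two constraints}: forcing $n^2+4$ divisible by $p_1 \cdots p_t$ puts $n$ in a fixed residue class mod $Q$, and I must be sure this congruence class does \emph{not} simultaneously force $n^2+1$ to be divisible by those same primes (which would ruin the almost-prime bound by inflating $r_2(n^2+1)$) — this is exactly why one needs $a_j^2 + 1 \not\equiv 0$ while $a_j^2+4 \equiv 0 \pmod {p_j}$, and one must verify such $a_j$ exists for each chosen $p_j$. A secondary nuisance is pinning down the numerical constant $32$: one has to trace through how $P_r$-ness of $n^2+1$ bounds $d(\text{odd } 1\bmod 4 \text{ part})$ and confirm the chosen $r$ (which the sieve leaves as an ineffective-looking but finite $R(1, 2)$) is compatible — but since $32$ is generous and $r$ may be taken as large as needed, this is safe; in the worst case one enlarges $Q$ to also kill small prime factors of $n^2+1$. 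The constants $10$, $32$, $\gamma^{-2}$ in the statement are deliberately loose, so no optimization is needed — only that $t = t(\gamma)$ be taken large enough that $4 \cdot 2^t \geq 10 \cdot 32/\gamma^2 = 320/\gamma^2$, using $\gamma < 1/10$ only to guarantee $t \geq 1$ makes sense (indeed $\gamma^{-2} > 100$ forces $t$ moderately large, harmless).
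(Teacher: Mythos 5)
Your proposal is essentially the paper's own proof: sieve for almost-primality of $(Q\ell+a)^2+1$ along an arithmetic progression where $Q$ is a product of primes $\equiv 1 \pmod 4$ and $a$ is chosen by CRT so that each $p_j \mid a^2+4$, giving $Q \mid (Q\ell+a)^2+4$ and hence a large $r_2$ for $m+3$; the paper's $P(K)$, $r(K)$, $x_K(n)$ play exactly the roles of your $Q$, $a$, $Q\ell+a$.

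Two calibrations are worth flagging. First, you characterize the compatibility of the two congruence demands — $a_j^2+4\equiv 0$ while $a_j^2+1\not\equiv 0 \pmod{p_j}$ — as the main obstacle requiring ``genuine care.'' It is in fact automatic: if $p \mid n^2+4$ and $p \mid n^2+1$ then $p \mid 3$, and $p_j \equiv 1 \pmod 4$ rules out $p_j=3$; no choice has to be made and no lower bound like $p_j\geq 7$ is needed. Second, and more substantively, your remark that ``$32$ is generous and $r$ may be taken as large as needed, this is safe'' has the logic backwards: increasing $r$ allows \emph{more} prime factors of $m$, hence a \emph{larger} $d(m)$ and $r_2(m)$, so the bound $r_2(m)\leq 32$ would fail. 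The constant $32$ specifically requires the Friedlander--Iwaniec input (Theorem~\ref{thm:sieve}\eqref{thm:iwaniec-friedlander}), which for a degree-$2$ irreducible polynomial yields $P_{3}$: at most $3$ prime factors with multiplicity, all $\equiv 1 \pmod 4$ (since $m=x^2+1$ is a sum of coprime squares), so $d(m)\leq 8$ and $r_2(m)\leq 32$. The Halberstam--Richert form gives an unspecified $r=R(1,2)$ and would not pin the constant down. Your suggested remedy of enlarging $Q$ to kill small prime factors of $n^2+1$ does not help, since the sieve constrains only the \emph{number} of prime factors, not their size. This is the one step of your write-up that would need to be committed to rather than left as an either/or.
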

\begin{proof}

We apply part \eqref{thm:iwaniec-friedlander} of Theorem \ref{thm:sieve} in the following
  setting: For $K\in\mathbb{Z}$ define 
\[
P(K)=\displaystyle\prod_{\substack{p\leq
      K\\p\equiv1\pmod4}}p,
\] 
and $r(K) \in \Z$ solving the following congruences:
  \begin{eqnarray*}
    \label{eq:1}
&& r(K)^{2}+4\equiv0\pmod{p}{\mbox{ if }} p\equiv1\pmod4,p\leq K\\
&&r(K) \equiv 0 \pmod2 
  \end{eqnarray*}
Note that the latter equation has a solution by the Chinese remainder
theorem together with $-4$ being a
quadratic residue  for any prime $p\equiv1\pmod4$. We may take $0 \leq
r(K) < 2P(K)$ but any fixed choice will suffice.
Let $f(x)=x^{2}+1$, and let
$x_{K}(n)=2P(K)\cdot n+r(K)$. The polynomial $F(n):=f(x_{K}(n))$ satisfies 
all the conditions of the theorem (it is irreducible and no prime
divides all coefficients), and therefore there are infinitely many $n$
such that $F(n)$ has at most 3 prime factors, and in particular
$r_{2}(F(n))\leq 32$. By construction, $F(n)+3 =x_{k}(n)^{2}+2^{2}
\equiv 0\pmod p$ for $p\leq K$
and  $p\equiv1\pmod4$, hence $r_{2}(F(n)+3)\geq
4\cdot2^{\pi(K;1,4)}$, where $\pi(K;1,4)$ is the number of primes occurring
in the product defining $P(K)$. By choosing $K$ appropriately, we get that  
\[
r_{2}(F(n)+3)\geq 4d(P(K))\geq 2^{\pi(K;1,4)+2}\geq
320/\gamma^{2}\geq 10 r_{2}(F(n))/\gamma^{2}
\]
\end{proof}
\begin{lem}
\label{lem:one-good-tuple}
Given $H,R \geq 2$ there exists elements $0< a_{1} < a_{2}
\ldots < a_{H}$ in $\mathcal{N}_2$ such that $a_{H}-a_{1} < H^{2}$,
$0< r_{2}(a_{1}) < \ldots < r_{2}(a_{H}) \ll_{H} R^{H}$, and
$$
r_{2}(a_{i+1} ) > R \cdot r_{2}(a_{i})
$$
holds for some $0< i < H$.
\end{lem}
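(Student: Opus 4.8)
The plan is to build the tuple $a_1 < \dots < a_H$ out of a single carefully chosen arithmetic-type progression, using the multiplicativity of $r_2$ recorded in the ``Sums of 2 squares'' paragraph together with the Chinese Remainder Theorem, so that consecutive elements differ by adding one more prime factor $\equiv 1 \pmod 4$. First I would fix an increasing sequence of primes $p_1 < p_2 < \dots < p_{H-1}$, all $\equiv 1 \pmod 4$, chosen large enough that each $p_j > R$ (this is where the bound $r_2(a_{i+1}) > R\, r_2(a_i)$ will come from). Set $Q_j = p_1 p_2 \cdots p_j$ for $1 \le j \le H-1$ and $Q_0 = 1$. I want integers $a_j$ with $Q_{j-1} \mid a_j$ but arranged so that the $a_j$ lie in a short window; the natural device is to take $a_j = b + c_j$ where $b$ is a common base and $c_j$ are small shifts, with divisibility engineered via CRT.

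The key steps, in order: (1) For each $j$, I want $a_j$ to be divisible by $Q_{j-1}$, to be a sum of two squares (so the $q_i \equiv 3 \pmod 4$ part has even exponents — easiest to just force $a_j$ itself to be a product of primes $\equiv 1 \pmod 4$ times a square), and to satisfy $r_2(a_j) < r_2(a_{j+1})$. Using multiplicativity, if $a_j = Q_{j-1} \cdot u_j$ with $u_j$ a sum of two squares coprime to $Q_{j-1}$, then $r_2(a_j) = 4 d(p_1^{1}\cdots p_{j-1}^{1}) \cdot \frac{r_2(u_j)}{4} = 2^{j-1} r_2(u_j)$ roughly (being careful with the factor of $4$ and with whether any $p_i$ divides $u_j$). (2) Choose the shifts so that $a_{j+1}/a_j \approx p_j$, i.e. arrange $a_{j+1} = a_j \cdot p_j$ up to the small-window constraint — but $a_j p_j$ is not within $H^2$ of $a_j$ unless $a_j$ is tiny, so instead I would work the other way: let all the $a_j$ be congruent to prescribed residues modulo $Q_{H-1}^2$ (say), solve by CRT for a residue class $r \pmod{N}$ with $N = Q_{H-1}^2$, and then among the integers in that class in an interval $[x, x + H^2]$ — which exist once $x$ is large, by just taking $H^2 > N$... no: here is the real subtlety. (3) Since I need all $H$ of the $a_j$ to sit inside a window of length $H^2$, I cannot prescribe each $a_j$ modulo a huge modulus independently. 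Instead I would prescribe the $a_j$ modulo a \emph{small} modulus: let $a_j \equiv r_j \pmod{M}$ where $M = p_1 p_2 \cdots p_{H-1}$ with the $p_i$ chosen to be the \emph{smallest} primes $\equiv 1 \pmod 4$ exceeding $R$ (so $M \ll_{H} R^{H-1}$ depends only on $H$ and $R$), pick the $r_j$ so that $p_1 \cdots p_{j-1} \mid r_j$ but $p_j \nmid r_j, \dots, p_{H-1} \nmid r_j$ via CRT, and finally look for a single integer $t$ with $a_j := M t + r_j$ a sum of two squares for every $j$ simultaneously, with the window $a_H - a_1 = M(t' ) + (r_H - r_1) < H^2$ forcing $t$ small — which conflicts with needing $a_j$ large enough for the sum-of-two-squares condition to be satisfiable.

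The honest resolution, and the step I expect to be the main obstacle, is the simultaneous sum-of-two-squares condition: I need infinitely many (or at least one) $t$ such that $Mt + r_j$ is a norm form value for all $j = 1, \dots, H$ at once. Rather than fighting this, I would instead \emph{not} require the $a_j$ to be exactly $M t + r_j$, but only require $r_2(a_j) > 0$; so I want one integer $A$ (playing the role of a common ``anchor'') such that among $A, A+1, \dots, A + H^2$ one can find $H$ values lying in $\mathcal N_2$ with the nested divisibility $Q_{j-1} \mid a_j$. Concretely: pick $A$ divisible by $Q_{H-1} = (p_1 \cdots p_{H-1})$ and also by a large perfect square making $A$ itself a sum of two squares (e.g. $A = Q_{H-1} \cdot Q_{H-1} = Q_{H-1}^2$ already works if all $p_i \equiv 1 \pmod 4$), then take $a_j = Q_{j-1} \cdot \big( Q_{H-1}^2 / Q_{j-1} + \varepsilon_j \big)$ — no, cleanest: set $a_j = A \cdot$(something). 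Let me restate the clean choice: take $a_j = A / p_j p_{j+1} \cdots p_{H-1}$ won't be an integer either. The workable construction is: let $A = (p_1 \cdots p_{H-1})^{2}$; then for $1 \le j \le H$ set $a_j := A + j$... this loses the divisibility. Given the length constraints here, I will therefore commit to the following as the plan: choose $p_1 < \dots < p_{H-1}$ the least primes $\equiv 1 \pmod 4$ above $\max(R, H^2)$ is too big; above $R$; let $a_1$ be \emph{any} sum of two squares divisible by none of the $p_i$ (e.g. $a_1 = 1 = 0^2 + 1^2$, or a suitable square), and inductively, having $a_j$, observe that the interval $(a_j, a_j + H]$... this still does not fit a window of size $H^2$ unless $a_j \le H$. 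So in fact the only way the window $a_H - a_1 < H^2$ and the condition $r_2(a_{i+1}) > R\, r_2(a_i)$ for \emph{some} $i$ (not all!) can coexist cheaply is to use a pigeonhole/counting argument: among the $r_2$-values of the elements of $\mathcal N_2$ in a window $[x, x+H^2]$, if there are $\ge H$ of them and their $r_2$-values span a factor $> R^{H-1}$, then by pigeonhole two consecutive ones (after sorting) jump by a factor $> R$. Thus the real task reduces to: find arbitrarily large $x$ and a window $[x, x+H^2]$ containing one integer $m_{\mathrm{hi}} \in \mathcal N_2$ with $r_2(m_{\mathrm{hi}})$ large (use the $P(K)$ trick from the proof of Lemma~\ref{lem:almost-prime-elts-in-S}: $m_{\mathrm{hi}} = y^2 + 1$ with $y \equiv r(K) \pmod{2 P(K)}$ gives $r_2(m_{\mathrm{hi}}) \ge 4 \cdot 2^{\pi(K;1,4)}$) together with $H - 1$ further integers in the same window lying in $\mathcal N_2$ with small, strictly increasing $r_2$-values. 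The small ones are easy: near $y^2$ there is always $y^2 + 1 \in \mathcal N_2$, and one can exhibit several sums of two squares in any window of length $H^2$ around $y^2$ by choosing $y$ so that $y^2 + c \in \mathcal N_2$ for several small $c$ (or simply take $y^2, (y{+}1)^2, \dots$ shifted — but those are outside the window). Given these competing constraints, I expect the decisive technical point to be producing $H$ elements of $\mathcal N_2$ inside a single window of length $H^2$ with the prescribed spread of representation numbers; I would handle it by taking $H$ consecutive values $y^2 + 1, (y+1)^2+1, \dots, (y+H-1)^2+1$ which lie in a window of length $\approx 2Hy + H^2$ — too long — so instead I accept a window of length $\ll_H 1$ only by the CRT construction above and concede that making the $a_j$ simultaneously sums of two squares while divisible by nested products of small primes $\equiv 1 \pmod 4$ is exactly where a short sieve-free argument is needed: one simply takes $a_j = M^2 \cdot s_j$ for fixed $M = p_1 \cdots p_{H-1}$ and integers $s_j \in \{1, \dots, H^2/M^2\}$ with $p_1 \cdots p_{j-1} \parallel \gcd(s_j, M)$, which works once $H$ is large relative to $R$ since then $H^2/M^2 \ge H$, giving room, and each $a_j = (M s_j')^2 \cdot (\text{unit part})$ is visibly a sum of two squares. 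This last normalization is the crux and the place where the hypothesis ``given $H, R \ge 2$'' with no uniformity is used.
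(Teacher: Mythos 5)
Your final ``workable construction'' --- taking $a_j = M^2 s_j$ with $M = p_1\cdots p_{H-1}$, $p_i > R$, and $s_j \in \{1,\dots, H^2/M^2\}$ --- is arithmetically impossible. Since $M > R^{H-1} \geq 2^{H-1}$, we have $M^2 > 4^{H-1}$ and hence $H^2/M^2 < H^2/4^{H-1} < 1$ for every $H \geq 2$; the supposed range for $s_j$ is empty. The claim that ``this works once $H$ is large relative to $R$'' has the monotonicity backwards: increasing $H$ makes $M$ grow exponentially, so $H^2/M^2 \to 0$. More fundamentally, if all the $a_j$ are forced to be multiples of $M^2$ then distinct ones differ by at least $M^2 > H^2$, so the window constraint $a_H - a_1 < H^2$ can never hold. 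This is the same tension you kept running into throughout the proposal, and it shows that nested divisibility of the $a_j$ themselves by a product of primes exceeding $R$ is the wrong mechanism.

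The idea you are missing is the one the paper uses to dissolve both of your stated obstacles at once: take $a_i = (Qn+\gamma)^2 + i^2$ for $i=1,\dots,H$. These are automatically sums of two squares (no need to engineer $r_2(a_i) > 0$ by divisibility), they lie in a window of length exactly $H^2-1$, and the nested divisibility needed to control $r_2$ is imposed not on the $a_i$ but on the residue $\gamma \bmod Q$: the CRT conditions force a prescribed prime power $q_i^{e_i}$ (with $q_i \equiv 1 \bmod 4$) to divide $\gamma^2 + i^2$ exactly, hence to divide $a_i$, pushing $r_2(a_i)$ up by a controllable factor $e_i + 1$. You also never address the \emph{upper} bound $r_2(a_H) \ll_H R^H$: even with the divisibility set up, the cofactor $a_i/(q_i^{e_i}d_i)$ could have many prime factors $\equiv 1\bmod 4$ and blow up $r_2(a_i)$. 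The paper closes this with the Halberstam--Richert sieve (Theorem~\ref{thm:sieve}(\ref{thm:halberstam-richert})), applied to the admissible family $G_i(t) = ((Qt+\gamma)^2 + i^2)/(q_i^{e_i}d_i)$, to produce $n$ for which every cofactor $G_i(n)$ has $O_H(1)$ prime factors. Your pigeonhole reduction (find $H$ elements of $\mathcal N_2$ in a window of length $H^2$ whose $r_2$-values spread over a factor $> R^{H-1}$, then sort) is a reasonable reformulation and is in spirit what the lemma accomplishes, but you correctly identify that you cannot supply the input to it, and the proposal offers no route to do so.
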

\begin{proof}
Define 
$$
Q_{1} = Q_{1}(H) := \prod_{p < 2H} p^{E_{p}}
$$
where the exponents $E_{p}$ are chosen as follows: let $E_{p}=1$ if $p
\equiv 3 \mod 4$, otherwise let $E_{p}$ be the minimal integer so that
$p^{E_{p}} > H^{2}$.

Further, let $q_{1}< q_{2} < \ldots < q_{H}$ be primes congruent to
$1 \mod 4$, chosen so that $q_{1} > 2H$, and 
given integer exponents $e_{1},\ldots, e_{H} \geq 1$, define
$$
Q_{2} = Q_{2}(e_{1},e_{2},\ldots,e_{H}):= \prod_{i \leq H} q_{i}^{e_{i}}
$$
and finally let $Q := Q_{1} \cdot Q_{2}$.

By the Chinese remainder theorem we may find $\gamma \mod Q$ such that
the following holds: 
\begin{equation}
  \label{eq:congruence-one}
\gamma \equiv 0 \mod p^{E_{p}} \quad \text{if $p \equiv 1,2 \mod 4$ and $p <2H$},
\end{equation}
\begin{equation}
  \label{eq:congruence-two}
\gamma \equiv 1 \mod p \quad \text{if $p \equiv 3 \mod 4$ and $p < 2H$}
\end{equation}
and for each prime  $q_{i} | Q_{2}$ so that
\begin{equation}
  \label{eq:congruence-three}
q_{i}^{e_{i}}  || (\gamma^{2}+i^{2})
\end{equation}

Letting $d_{i} = (Q_{1}^{2},\gamma^{2}+i^{2})$ we  define polynomials
$G_{i} \in \Q[t]$ by
$$
G_{i}(t) := ((Q t  + \gamma)^{2} + i^{2})/(q_{i}^{e_{i}} d_{i}),
\quad i = 1,2, \ldots, H.
$$  
By definition, $d_{i} | \gamma^{2}+i^{2}$, and
\eqref{eq:congruence-three} implies 
that $q_{i}^{e_{i}} | \gamma^{2}+i^{2}$. Thus, since $(Q_{1},Q_{2})=1$
implies that $(q_{i},d_{i})=1$, we find that $q_{i}^{e_{i}} d_{i} |
\gamma^{2}+i^{2}$ and consequently $G_{i}(t) \in \Z[t]$ for all $i$.

\begin{claim}
$\{G_{i}\}_{i=1}^{H}$ is an admissible set of
polynomials (i.e., $\prod_{i=1}^{H} G_{i}(x)$ does not have any fixed
prime divisors).  
\end{claim}
To prove the  claim we argue as follows:
If $p > 2H$ and all $G_{i}$ are nonconstant modulo $p$ (i.e.,
$p \nmid Q$) there are at most $2H$ residues $n$ (modulo $p$) for which
$G_{i}(n) \equiv 0 \mod p$ for some $i$.  Hence there exist $n\in \Z$
such that $\prod_{i=1}^{H} G_{i}(n) \not \equiv 0 \mod p$.

On the other hand, if $p>2H$
and $p|Q$ then $p = q_{i}$ for some $i$, and by the definition of
$G_{i}$ (in particular, recall \eqref{eq:congruence-three}), we find 
that $G_{i}(n) \not \equiv 0 \mod q_{i}$ for all $n\in \Z$.  Moreover,
if $j \neq i$,
$$
\gamma^{2} + j^{2} \equiv
\gamma^2 + i^{2} + j^{2} - i^{2} \equiv
j^{2}-i^{2} \not \equiv 0 \mod q_{i}
$$
(as $0<|i-j|<H$,  $0<i+j < 2H$ and $q_{i} > 2H$), and thus $G_{j}(n)
\not \equiv 0 \mod q_{i}$ for all $n \in \Z$.

For $p < 2H$ we argue as follows: if $p \equiv 3 \mod 4$, \eqref{eq:congruence-two}
gives that $\gamma^{2} + i^{2} \not \equiv 0 \mod p$ for all $i \in
\Z$.  Otherwise, $i^{2} \leq H^{2} < p^{E_{p}}$ by our choice of
$E_{p}$, and since $\gamma$ was chosen so that $\gamma \equiv 0 \mod
p^{E_{p}}$ (recall \eqref{eq:congruence-one}), we find that $ \gamma^{2} + i^{2}
\equiv i^{2} \not \equiv 0 \mod p^{2E_{p}}$, as $i^{2} \leq H^{2}$
and $p^{E_{p}} > H^{2}$.
Consequently
$(\gamma^{2}+i^{2})/d_{i}$ is not divisible by $p$.
The proof of the claim is concluded.

Now, given an integer $r >0$, let $P_{r}$ denote the set of integers that
can be written as a product of at most $r$ primes, as in \S\ref{sec:sieve-method-results}.
Since the polynomials $\{ G_{i}(x) \}_{i=1}^{H}$ form an admissible set,
part \eqref{thm:halberstam-richert} of Theorem \ref{thm:sieve}
implies that there exists some
$r>0$ (only depending on $H$) such that
$$
\prod_{i=1}^{H} G_{i}(n) \in P_{r}
$$
for infinitely many $n$.  Given such an $n$, let $m_{i} = G_{i}(n)$;
then each $m_{i}$ is a sum of squares that in addition has at most $r$
prime factors.  Consequently, if we set $a_{i} = m_{i} \cdot
q_{i}^{e_{i}} \cdot d_{i}$, we find that $a_{i} \in \mathcal{N}_2$ for
all $1 \leq i \leq H$, and that
$$
e_{i}+1 \leq r_{2}(a_{i})   \leq 4 C \cdot (e_{i}+1) 
$$
where $C = C(H) \geq 1$ is independent of the exponents $e_{1},
\ldots, e_{H}$.  Choosing $e_{1}, \ldots, e_{H}$ appropriately we can
ensure that $r_{2}(a_{i+1})> R \cdot r_{2}(a_{i})$ holds for all $i$,
as well as that $r_{2}(a_{H}) \ll_{H} C^{H} R^{H} \ll_{H} R^{H}$ (a
somewhat better $C$-dependency can be obtained but we shall not need
it.)

Finally, since 
$$
a_{i} = m_{i} q_{i}^{e_{i}} d_{i} = 
G_{i}(n)   q_{i}^{e_{i}} d_{i} = 
 (Qn+\gamma)^{2} + i^{2}
$$ 
we find that
$a_{H}-a_{1} = H^{2}-1 < H^{2}$ and the proof of
Lemma~\ref{lem:one-good-tuple} is concluded.

\end{proof}

The following proposition might be of independent interest --- using
the full power of \cite{Maynard-dense-clusters}, the method of the
proof in fact gives the following: given $k \geq 2$ and $R > 1$ there
exists $A>0$ such that, as $x \to \infty$, there are $\gg x/(\log
x)^{A}$ integers $n \leq x$ such that $r_{2}(n+h_{i+1}) \geq R
r_{2}(n+h_{i})$ holds for $i=1, \ldots, k-1$ and $0< h_{1} < h_{2} <
\ldots < h_{k} \ll_{k} 1$.  For simplicity we only  state and prove
it for $k=2$.
\begin{prop}
\label{prop:r2-boost}
There exist an integer $H \geq 1$ with the following property:  for
all sufficiently large $R$
there exist an integer $h  \in (0,H^{2})$ such that
\begin{multline*}
|\{ n \in \mathcal{N}_2 : n \leq x, \quad 0 < r_{2}(n) \ll R^{H},
\quad r_{2}(n+h) \geq R \cdot r_{2}(n)  \}|
\\
\gg_{R} x/(\log x)^{H}
\end{multline*}
as $x \to \infty$.
\end{prop}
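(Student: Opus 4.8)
The plan is to deduce the statement from part~\eqref{prop:maynard-implication} of Theorem~\ref{thm:sieve}, via a congruence construction in the spirit of Lemma~\ref{lem:one-good-tuple} but arranged so that the auxiliary polynomials are \emph{linear} (Maynard's result only applies to linear forms). First I would fix $k$ large enough that part~\eqref{prop:maynard-implication} applies and that there is an admissible $k$-tuple of positive integers inside $(0,H^{2}/4)$, where $H:=k$ --- e.g.\ the $k$ smallest primes exceeding $k$, which have diameter $O(k\log k)$ and are admissible since none of them is divisible by a prime $\le k$. Multiplying such a tuple by $4$ yields $0<h_{1}<\cdots<h_{k}<H^{2}$ with all $h_{i}\equiv 0\pmod 4$ and $(h_{1},\dots,h_{k})$ still admissible; this tuple of bounded shifts is fixed once and for all.

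Now let $R$ be large. I would choose distinct primes $q_{1}<\cdots<q_{k}$, all $\equiv 1\pmod 4$ and all $>H^{2}$, set $e_{i}:=\lceil R\rceil^{i}-1$ (so $1\le e_{1}<\cdots<e_{k}$ and $(e_{i+1}+1)/(e_{i}+1)=\lceil R\rceil\ge R$), put $Q:=4\prod_{i=1}^{k}q_{i}^{e_{i}+1}$, and use the Chinese remainder theorem to choose $\gamma$ with $0\le\gamma<Q$, $\gamma\equiv 1\pmod 4$, and $\gamma+h_{i}\equiv q_{i}^{e_{i}}\pmod{q_{i}^{e_{i}+1}}$ for each $i$. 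The last condition forces $q_{i}^{e_{i}}\,\|\,(\gamma+h_{i})$, hence $q_{i}^{e_{i}}\,\|\,(Qm+\gamma+h_{i})$ for all $m$, so the linear forms
\[
L_{i}(m):=\frac{Qm+\gamma+h_{i}}{q_{i}^{e_{i}}}=\frac{Q}{q_{i}^{e_{i}}}\,m+\frac{\gamma+h_{i}}{q_{i}^{e_{i}}}
\]
lie in $\Z[m]$ and have positive coefficients. The key point, and the step I expect to cost the most work, is that $\{L_{1},\dots,L_{k}\}$ is an \emph{admissible} set; the verification runs parallel to the Claim inside the proof of Lemma~\ref{lem:one-good-tuple}. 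For $p=q_{j}$ one shows $q_{j}\nmid L_{i}(m)$ for all $i,m$: for $i=j$ because $q_{j}^{e_{j}}\,\|\,(\gamma+h_{j})$, and for $i\ne j$ because $\gamma\equiv -h_{j}\pmod{q_{j}}$ while $0<|h_{i}-h_{j}|<H^{2}<q_{j}$. For $p=2$, each constant term $(\gamma+h_{i})/q_{i}^{e_{i}}$ is odd (since $\gamma+h_{i}\equiv 1\pmod 4$), so every $L_{i}(m)$ is odd. For an odd prime $p\notin\{q_{1},\dots,q_{k}\}$ each $L_{i}$ is nonconstant mod $p$ with unique root $m\equiv-(\gamma+h_{i})Q^{-1}\pmod p$, and these $k$ roots exhaust $\Z/p\Z$ only if $\{h_{i}\bmod p\}=\Z/p\Z$, which admissibility of $(h_{1},\dots,h_{k})$ forbids.

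With admissibility in hand, part~\eqref{prop:maynard-implication} of Theorem~\ref{thm:sieve} supplies $\gg_{R}y/(\log y)^{k}$ integers $m\le y$ for which at least two of $L_{1}(m),\dots,L_{k}(m)$ are prime; I take $y\asymp_{R}x$ so that $n:=Qm+\gamma\le x$. A prime value $L_{i}(m)$ is automatically $\equiv 1\pmod 4$, because $Qm+\gamma+h_{i}\equiv 1\pmod 4$ and $q_{i}^{e_{i}}\equiv 1\pmod 4$. Hence for $\gg_{R}x/(\log x)^{k}$ values of $n\le x$ there are indices $a<b$ with $b_{a}:=L_{a}(m)$ and $b_{b}:=L_{b}(m)$ both prime and $\equiv 1\pmod 4$; picking one such pair for each $n$ and applying pigeonhole over the $\binom{k}{2}$ possibilities, one fixed pair $(a,b)$ occurs for $\gg_{R}x/(\log x)^{k}$ of the $n$. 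Setting $h:=h_{b}-h_{a}\in(0,H^{2})$ and $n':=n+h_{a}$, we have $n'=q_{a}^{e_{a}}b_{a}$ and $n'+h=n+h_{b}=q_{b}^{e_{b}}b_{b}$, each a product of coprime sums of two squares (for $m$ large the huge primes $b_{a},b_{b}$ differ from the fixed $q_{a},q_{b}$), hence in $\mathcal{N}_{2}$, with
\[
r_{2}(n')=8(e_{a}+1),\qquad r_{2}(n'+h)=8(e_{b}+1)=\lceil R\rceil^{\,b-a}\,r_{2}(n')\ge R\,r_{2}(n').
\]
Since $k=H$ is an absolute constant and $e_{a}\le e_{k}=\lceil R\rceil^{k}-1$, we get $0<r_{2}(n')\ll R^{H}$; the map $n\mapsto n'$ is injective and $n'\le x$ up to an $O_{R}(1)$ boundary correction, so the count of admissible $n'$ is $\gg_{R}x/(\log x)^{k}=x/(\log x)^{H}$, which is the assertion. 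The hardest part, as flagged, is arranging all of this simultaneously: an admissible shift-tuple supported on multiples of $4$ and fitting in $(0,H^{2})$; a modulus $Q$ and residue $\gamma$ reconciling the $q_{i}^{e_{i}}\,\|\,$ conditions with the mod-$4$ condition and with admissibility of $\{L_{i}\}$; and a choice of the $e_{i}$ fast enough for the factor-$R$ gain yet slow enough that $e_{k}\ll_{H}R^{H}$, so the cap $r_{2}(n')\ll R^{H}$ survives.
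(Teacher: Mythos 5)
Your argument is correct, and it takes a genuinely different route from the paper's. The paper proves Proposition~\ref{prop:r2-boost} in two stages: it first invokes Lemma~\ref{lem:one-good-tuple} (whose proof runs through quadratic polynomials and the Halberstam--Richert sieve, part~\eqref{thm:halberstam-richert} of Theorem~\ref{thm:sieve}) to produce a fixed, clustered tuple $a_{1}<\dots<a_{H}$ in $\mathcal{N}_{2}$ with $r_{2}(a_{i+1})\ge R\,r_{2}(a_{i})$ and $a_{H}-a_{1}<H^{2}$, and then feeds the \emph{trivially admissible} linear forms $F_{i}(n)=a_{i}n+1$ into Maynard; the crucial identity $a_{j}F_{i}(n)-a_{i}F_{j}(n)=a_{j}-a_{i}$ converts a simultaneous prime pair $(F_{i}(n),F_{j}(n))$ into a close pair $m'=a_{i}F_{j}(n)$, $m=a_{j}F_{i}(n)$ in $\mathcal{N}_{2}$ with $r_{2}(m')=2r_{2}(a_{i})$, $r_{2}(m)=2r_{2}(a_{j})$. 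You instead fuse the two stages: you build the arithmetic-progression weights (the prescribed $q_{i}^{e_{i}}\|\,\cdot$ conditions) directly into a single family of \emph{linear} forms $L_{i}(m)=(Qm+\gamma+h_{i})/q_{i}^{e_{i}}$, and apply Maynard once. This bypasses Lemma~\ref{lem:one-good-tuple} and the Halberstam--Richert input entirely. The price is that admissibility of your $L_{i}$ is no longer automatic --- you need the CRT-based verification (which you carry out, and which parallels the Claim inside Lemma~\ref{lem:one-good-tuple}'s proof); the gain is a shorter, single-sieve argument relying only on Maynard. The remaining details --- that prime values of $L_{i}$ are automatically $\equiv1\bmod 4$, that $r_{2}(n')=8(e_{a}+1)$ and $r_{2}(n'+h)/r_{2}(n')=\lceil R\rceil^{\,b-a}\ge R$, the bound $r_{2}(n')\ll R^{H}$ with $H$ fixed, the pigeonhole over $\binom{k}{2}$ pairs to fix $h$, and the injectivity of $n\mapsto n'$ --- all check out.
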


\begin{proof}

By part~\eqref{prop:maynard-implication} of Theorem \ref{thm:sieve} there exists
integers $i,j$ such that $0< i < j \leq H$ with the property that
$$
|\{ n \leq x : F_{i}(n), F_{j}(n) \text{ both prime} \}|
\gg x/\log^{H} x
$$
for $\{ F_{1},F_{2},\ldots, F_{H} \}$ {\em any} admissible set of $H$
{\bf{linear}} forms, provided $H$ is sufficiently large.  For such an $H$,
and a given (large) $R$,  Lemma~\ref{lem:one-good-tuple} shows there
exists $a_{1}, \ldots, a_{H}>0$ such that
$$
r_{2}(a_{i+1}) \geq R \cdot r_{2}(a_{i}) > 0
$$
for $1 \leq i < H$, and $r_{2}(a_{H}) \ll R^{H}$.  If we define
$$
F_{i}(n) := a_{i} \cdot n  + 1
$$
for $1 \leq i \leq H$ we obtain a set of $H$ admissible linear forms
(here admissibility is trivial since $F_{i}(0) \not \equiv 0 \mod p$
for any prime $p$), hence there exists $i,j$ with $j>i$ such that
$$
|\{ n \leq x : F_{i}(n), F_{j}(n) \text{ both prime} \}|
\gg x/\log^{H} x
$$
Further, given  primes $p = F_{i}(n)$ and $p'= F_{j}(n)$, define
$m= a_{j} \cdot p$ and $m' = a_{i}\cdot p'$.  
Now, since $a_{i} \equiv 0 \mod 4$ for all $i$, $F_{i}(n) \equiv 1 \mod 4$
for all $n$, hence $p,p' \equiv 1 \mod 4$ and consequently $m,m' \in
\mathcal{N}_2$.  
Further,  $m'
\ll_{R} x$; letting $h = m-m'$ we find that
$$
h = m-m' = a_{j} \cdot F_{i}(n) -  a_{i} \cdot F_{j}(n) = a_{j}
-a_{i} 
$$
and thus $0 < h < H^{2}$.
Moreover,
$$
r_{2}(m) = r_{2}(p \cdot a_{j}) = 2 \cdot r_{2}(a_{j})
$$
and similary $r_{2}(m') = 2 \cdot r_{2}(a_{i})$.  Since $r_{2}(a_{j}) \geq
R \cdot r_{2}(a_{i})$  we find that 
$$
r_{2}(m') \geq R \cdot r_{2}(m),
$$
and that $r_{2}(m) = 2 \cdot r_{2}(a_{i}) \ll R^{H}$.  Taking $n = m'$
and $h = m-m'$ we find that the number of $n \ll_{R} x$ with the
desired property is $\gg x/(\log x)^{H}$, thus concluding the proof.
  
\end{proof}
\section{Proof of Theorem \ref{thm:3dscar}}
\label{sec:proof-theor-refthm:3}
We prove Theorem \ref{thm:3dscar} by calculating the Fourier
coefficients of the measure (or more precisely, the coefficents in the
spherical harmonics expansion.) Let $Y(x)$ be a spherical harmonic on
$\mathbb{S}^{2}$. Then by the definition of
$g_{\lambda}=\frac{G_{\lambda}}{\|G_{\lambda}\|}$
(cf. \eqref{eq:def-g-lambda}), and the action of $\Op(Y)$
(cf. \eqref{eq:def-matrix-elem}) we get
\begin{multline}
  \label{eq:3dscar-fourier-coefs}
\langle \Op(Y)g_{\lambda},g_{\lambda}\rangle
=\\=
\frac{\displaystyle\sum_{\xi\in\mathbb{Z}^{3}} 
Y
\left(
\frac{\xi}{|\xi|}
\right)\left(
\frac{1}{|\xi|^{2}-\lambda}
  \right)^{2}}
{\displaystyle\sum_{\xi\in\mathbb{Z}^{3}} 
\left(
\frac{1}{|\xi|^{2}-\lambda}
  \right)^{2}}=
\frac{\displaystyle\sum_{n\in\mathcal{N}_{3}} 
W_{Y}(n)\left(
\frac{1}{n-\lambda}
  \right)^{2}}{\displaystyle\sum_{n\in\mathcal{N}_{3}} 
r_{3}(n)\left(
\frac{1}{n-\lambda}
  \right)^{2}}\\=
\frac{W_{Y}(m)+(m-\lambda)^{2}\displaystyle
  \sum_{n\in\mathcal{N}_{3}\setminus\{m\}} 
W_{Y}(n)\left(
\frac{1}{n-\lambda}
  \right)^{2} }
{r_{3}(m)+(m-\lambda)^{2}\displaystyle\sum_{n\in\mathcal{N}_{3}\setminus\{m\}}  
r_{3}(n)\left( 
\frac{1}{n-\lambda}
  \right)^{2} }
\end{multline}
for any $m\in\mathcal{N}_{3}$.  Now, for $m\in\mathcal{N}_{3}$, define
\[
H_{m}(\lambda):=\sum_{n\in\mathcal{N}_{3}\setminus\{m\}}r_{3}(n)
\left(
\frac{1}{n-\lambda}-\frac{n}{n^{2}+1}
\right)
\]
and  rewrite the ``new'' eigenvalue equation
\eqref{eq:evalue} as
\begin{equation}
  \label{eq:evalue-dim3}
\frac{r_{3}(m)}{m-\lambda}-\frac{m}{m^{2}+1}+H_{m}(\lambda)
=c_{0}\tan
\left(
\frac{\varphi}{2}
\right).
\end{equation}
We can now apply Lemma \ref{lem:nearby-zeros}.  Setting $\lambda = m
+\delta$, let
\[
f_{m}(\delta) :=H_{m}(m+\delta)-\frac{m}{m^{2}+1}-c_{0}\tan
\left(
\frac{\varphi}{2}
\right).
\]
Then 
\begin{equation}
  \label{eq:f-prime-sum}
f_{m}'(\delta)=\sum_{n\in\mathcal{N}_{3}\setminus\{m\}}\frac{r_{3}(n)}{(n-m-\delta)^{2}}
>0
\end{equation}
Notice that for $|\delta|<\frac{1}{2}$ there exists an absolute constant
$C>1$ such that
\begin{equation}
  \label{eq:upper-lower-fprime-bound}
\frac{1}{C}f_{m}'(0)\leq f_{m}'(\delta)\leq Cf_{m}'(0).
\end{equation}
Equation \eqref{eq:evalue-dim3} can now be rewritten as
\[
f_{m}(\delta)=\frac{r_{3}(m)}{\delta}
\]
hence, {\em provided} that we can find $m$ for which the bound
$\sqrt{Cr_{3}(m)/f_{m}'(0)}<\frac{1}{2}$ holds, we may take $I =
[-1/2,1/2]$ in 
Lemma \ref{lem:nearby-zeros} and obtain  an eigenvalue
$\lambda$ such that 
\begin{equation}
  \label{eq:5}
|\lambda-m|<\sqrt{Cr_{3}(m)/f_{m}'(0)}.
\end{equation}
To find $m$ for which the above bound is valid, we proceed as follows.
For $l\in\mathcal{N}_{3}$  fixed, define
\[
\Omega(l):=
\left\{
\frac{\xi}{\|\xi\|}:\|\xi\|^{2}=l,\xi\in\mathbb{Z}^{3}
\right\}
\]
and let $\mathcal{M}_{l}:= \left\{ 4^{k}l:k\in\mathbb{N} \right\}$. 
For $m\in\mathcal{M}_{l}$ we then have $r_{3}(m)=r_{3}(l)$, hence
$r_{3}(m)$ is uniformly bounded; we also note that
$\Omega(m)=\Omega(l)$. Since for any integer $m$ there exists an
integer $m' \not \equiv 0,4,7 \pmod 8$ of bounded distance from $m$,
\eqref{eq:siegel-bound} implies that
\begin{equation}
  \label{eq:3d-derivative-bound}
f_{m}'(0)=\sum_{n\in\mathcal{N}_3}
\frac{r_3(n)}{|n-m|^2}\geq\frac{r_3(m')}{|m'-m|^2}
\gg 
r_3(m')\gg (m')^{1/2-\varepsilon}\gg m^{1/2-\varepsilon}.
\end{equation}
Since $r_{3}(m)$ is uniformly bounded for 
$m\in\mathcal{M}_{l}$, we find that 
$$\sqrt{Cr_{3}(m)/f_{m}'(0)}<m^{-1/4+\epsilon}$$
for all sufficiently large $m \in \mathcal{M}_{l}.$ By the above
argument, we have thus found infinitely many $m$ for which there exist
a nearby new eigenvalue $\lambda$ satisfying
$|m-\lambda|<\sqrt{Cr_{3}(l)/f_{m}'(0)} <m^{-1/4+\epsilon}$. In fact, using
\eqref{eq:3d-derivative-bound} we can apply Lemma
\ref{lem:nearby-zeros} again, to get that 
\eqref{eq:upper-lower-fprime-bound} holds for
$C=1+O(m^{-1/4+\epsilon})$ and $\delta = O(m^{-1/4+\epsilon})$.   Let 
$\Lambda_{l}$ be  
the sequence of these eigenvalues; for $\lambda \in \Lambda_{l}$ we
then find, upon recalling the equality in \eqref{eq:f-prime-sum}, and
that
(\ref{eq:upper-lower-fprime-bound}) is valid since $|m-\lambda| =
O(m^{-1/4+\epsilon})$, that 
\begin{multline}
  \label{eq:bounded-l2-norm}
|m-\lambda|^{2}\sum_{n\in\mathcal{N}_{3}\setminus\{m\}}\frac{r_{3}(n)}{|n-\lambda|^{2}}\leq
\\ 
\frac{(1+o(m^{-1/4}))r_{3}(l)}{f_{m}'(0)}\sum_{n\in\mathcal{N}_{3}\setminus\{m\}}
\frac{r_{3}(n)}{|n-\lambda|^{2}}=(1+O(m^{-1/4+\epsilon}))^{2}r_{3}(l) 
\end{multline}
which is bounded. From now on we restrict $\Lambda_{l}$ to a
subsequence such that the limit
\[
A_{l}:=\lim_{\lambda\in\Lambda_{l}}|m-\lambda|^{2}\sum_{n\in\mathcal{N}_{3}\setminus\{m\}}\frac{r_{3}(n)}{|n-\lambda|^{2}}
\]
exists, and hence by \eqref{eq:bounded-l2-norm} is bounded by $r_{3}(l)$.
Furthermore,  for any spherical
harmonic $Y$, 
\begin{equation}
  \label{eq:vanishing-fourier-coeffs}
|m-\lambda|^{2}\sum_{n\in\mathcal{N}_{3}\setminus\{m\}}
\frac{|W_{Y}(n)|}{|n-\lambda|^{2}}\leq  
Cr_{3}(l)/f_{m}'(0)\sum_{n\in\mathcal{N}_{3}\setminus\{m\}}
\frac{|W_{Y}(n)|}{|n-\lambda|^{2}}. 
\end{equation}
We claim that the RHS converges to 0 as
$m\to\infty$. To see this, write
\[
\sum_{n\in\mathcal{N}_{3}\setminus\{m\}}\frac{|W_{Y}(n)|}{|n-\lambda|^{2}}
\leq
\sum_{\substack{n\in\mathcal{N}_{3}\setminus\{m\}\\n\leq m+
    m^{1/3}}}\frac{|W_{Y}(n)|}{|n-\lambda|^{2}} +
\sum_{\substack{n\in\mathcal{N}_{3}\setminus\{m\}\\n>m+
    m^{1/3}}}\frac{|W_{Y}(n)|}{|n-\lambda|^{2}}.  
\]
For the first sum, using that $|\lambda-n| > 1/2$ for $n \neq m$
together with the bound $W_{Y}(n)\ll m^{1/2-\alpha}$ (using
\eqref{eq:Weyl-sum-cancelation}) we find that for all $n\leq m+m^{1/3}$ in
the summand, the first sum is $\ll m^{1/2-\alpha}$.
For the second sum, the mean value theorem gives  that
\[
W_{Y}(n)\ll
n^{1/2-\alpha}=(n-m)^{1/2-\alpha}+O
\left(
\frac{m}{(n-m)^{1/2+\alpha}}
\right) 
\]
and thus,  
\begin{multline}
  \label{eq:weyl-sum-bound1}
 \sum_{n>m+m^{1/3}}\frac{W_{Y}(n)}{|n-m|^{2}}\ll\\
\sum_{n-m>m^{1/3}}\frac{1}{|n-m|^{3/2+\alpha}} +O
 \left(
\sum_{n-m>m^{1/3}}\frac{m}{|n-m|^{5/2+\alpha}}
 \right)\ll\\
 m^{-1/3(1/2+\alpha)}+m^{1-1/3(3/2+\alpha)}\ll m^{1/2-\alpha}.
\end{multline}
Hence, since $f_{m}'(0)\gg m^{1/2-\varepsilon}$
(cf. \eqref{eq:3d-derivative-bound}), 
\begin{equation}
    \label{eq:weyl-sum-bound2}
      \frac{Cr_{3}(l)}{f'(0)}\sum_{n\in\mathcal{N}_{3}}\frac{W_{Y}(n)}{|n-m|^{2}}\ll
\frac{m^{1/2-\alpha}}{m^{1/2-\varepsilon}}\ll m^{-\alpha+\varepsilon}
  \end{equation}
Thus,  for any fixed spherical harmonic $Y$, and for every
$\lambda\in\Lambda_{l}$,
\[
\langle
\Op(Y)g_{\lambda},g_{\lambda}\rangle=
\begin{cases}
 \frac{W_{Y}(l)}{(1+A_{l}+o(1))r_{3}(l)}+ O(
 \lambda^{-\alpha+\varepsilon}) & \text{if $Y$ is non trivial,}\\
1 & \text{if $Y$ is trivial.}  
\end{cases}.
\]
Since these are the spherical harmonics coefficients of the measure
$\frac{1}{1+A_{l}}\delta_{\Omega(l)}+\frac{A_{l}}{1+A_{l}}\nu$, the
proof is concluded.  (Recall that $\nu$ denotes the uniform measure.)

\section{Proof of Theorem \ref{thm:2dscar}}
\label{sec:proof-2dim}

We start by finding a sequence of new eigenvalues lying close to the
set of old eigenvalues. To do so we will again use Lemma
\ref{lem:nearby-zeros}. Recall that
\begin{equation}
  \label{eq:def-H-F}
F(\lambda)=
\begin{cases}
\text{Constant} & \mbox{(weak coupling)}\\
\sum\limits_{\substack{n\in\mathcal{N}_{2}\\|n-n_{+}(\lambda)|\geq
        n_{+}(\lambda)^{\eta}}}\hspace{-25pt}r_{2}(n)\left(
\frac{1}{n-\lambda}-\frac{n}{n^{2}+1}
\right) & \mbox{(strong coupling)}
\end{cases}
\end{equation}
and in analogy with the three dimensional case we define
\begin{multline}
  \label{def-Hm-dim2}
  H_m(\lambda) =
\sum_{n\in\mathcal{N}_{2}\setminus\{m\}}r_{2}(n)\left(\frac{1}{n-\lambda}-
  \frac{n}{n^2+1}\right)-F(\lambda)=\\
\sum_{n\in I(\lambda)\setminus\{m\}}r_{2}(n)\left(\frac{1}{n-\lambda}-
  \frac{n}{n^2+1}\right) 
\end{multline}
where (for some fixed $\eta > 131/146$)
\[
I(\lambda):=
\begin{cases}
\mathcal{N}_{2} \cap
[n_{+}(\lambda)-n_{+}(\lambda)^{\eta},n_{+}(\lambda)+n_{+}(\lambda)^{\eta}] 
&\mbox{(strong   coupling)}\\
\mathcal{N}_{2}&\mbox{(weak coupling)}
\end{cases}
\]

\begin{prop}
\label{prop:2dim-bdd-norm}
Let $F(\lambda)$ be as above, and given $\gamma\in(0,1/10)$ let
$\mathcal{M}_{\gamma}$ be the set of integers given by Lemma
\ref{lem:almost-prime-elts-in-S}.
%
Then, for any $m\in\mathcal{M}_{\gamma}$, there exists a new eigenvalue
$\lambda$ such that $|\lambda-m| \leq \gamma$ and
  \begin{equation}
    \label{eq:good-stuff}
H_{m}'(\lambda) \leq (1+O(\gamma)) \cdot
\frac{r_{2}(m)}{(m-\lambda)^{2}}
  \end{equation}
as $\gamma \to 0$.
\end{prop}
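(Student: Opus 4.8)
The plan is to apply Lemma~\ref{lem:nearby-zeros} to the function $f_m(\delta) := H_m(m+\delta) - \frac{m}{m^2+1}$, so that the rewritten eigenvalue equation \eqref{eq:bth_cpl_2} becomes $f_m(\delta) = r_2(m)/\delta$ after setting $\lambda = m+\delta$. First I would record that
\[
f_m'(\delta) = \sum_{n \in I(\lambda) \setminus \{m\}} \frac{r_2(n)}{(n-m-\delta)^2} > 0,
\]
so that with $A = r_2(m)$ and $B = \min_{|\delta| \le \gamma} f_m'(\delta)$, Lemma~\ref{lem:nearby-zeros} produces a solution $\delta_0$ with $|\delta_0| \le \sqrt{A/B} = \sqrt{r_2(m)/B}$, \emph{provided} we can check $\sqrt{r_2(m)/B} \le \gamma$, i.e. that $B \ge r_2(m)/\gamma^2$. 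This is exactly where the two properties of $\mathcal{M}_\gamma$ from Lemma~\ref{lem:almost-prime-elts-in-S} enter: $r_2(m) \le 32$ (so the numerator is bounded) and $r_2(m+3) \ge 10 r_2(m)/\gamma^2$. Since $m+3 \in I(\lambda)$ for $\lambda$ near $m$ (as $m = n^2+1$, the distance from $m$ to $m+3$ is $3$, comfortably within the window $n_+(\lambda)^\eta$ in the strong coupling case, and trivially so in the weak case), the single term $n = m+3$ in the sum defining $f_m'(\delta)$ already gives
\[
f_m'(\delta) \ge \frac{r_2(m+3)}{(m+3-m-\delta)^2} = \frac{r_2(m+3)}{(3-\delta)^2} \ge \frac{r_2(m+3)}{9} \ge \frac{10 r_2(m)}{9\gamma^2} > \frac{r_2(m)}{\gamma^2}
\]
for $|\delta| \le \gamma \le 1/10$. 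Hence $B > r_2(m)/\gamma^2$, so $\sqrt{r_2(m)/B} < \gamma$, and the hypothesis of Lemma~\ref{lem:nearby-zeros} is satisfied with $I = [-\gamma,\gamma]$.

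This yields a new eigenvalue $\lambda = m + \delta_0$ with $|\lambda - m| = |\delta_0| \le \sqrt{r_2(m)/B} \le \gamma$, establishing the first claim. For the estimate \eqref{eq:good-stuff}, the point is that $H_m'(\lambda) = f_m'(\delta_0)$ and we must compare this to $r_2(m)/(m-\lambda)^2 = r_2(m)/\delta_0^2$. By construction $f_m(\delta_0) = r_2(m)/\delta_0$, so $r_2(m)/\delta_0^2 = f_m(\delta_0)/\delta_0$. I would then exploit that, because $B$ is large compared to $r_2(m)/\gamma^2$ while $|f_m(0)| = |m/(m^2+1)| \le 1/m$ is tiny, the root $\delta_0$ is close to the naive value: more precisely, from the quadratic estimate in the proof of Lemma~\ref{lem:nearby-zeros}, $\delta_0$ satisfies $B\delta_0^2 + f_m(0)\delta_0 - r_2(m) \le 0 \le f_m'(\delta_0)\delta_0^2 + \cdots$, which pins $\delta_0^2$ between $r_2(m)/f_m'(\delta_0)$ and $r_2(m)/f_m'(\delta_0)$ up to a multiplicative $(1 + O(|f_m(0)|\delta_0/r_2(m))) = (1+O(\gamma))$ correction (using $\delta_0 = O(\gamma)$, $r_2(m) \ge 1$, and $|f_m(0)| = O(1)$). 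Rearranging gives $f_m'(\delta_0) \le (1+O(\gamma)) r_2(m)/\delta_0^2$, which is \eqref{eq:good-stuff}.

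\textbf{Main obstacle.} The routine bookkeeping is harmless, but the step requiring genuine care is making the comparison between $H_m'(\lambda)$ and $r_2(m)/(m-\lambda)^2$ quantitatively tight with the stated $(1+O(\gamma))$ error. The naive bound from Lemma~\ref{lem:nearby-zeros} only controls $|\delta_0|$ by $\sqrt{A/B}$ with $B = \min f_m'$, whereas \eqref{eq:good-stuff} needs the value of $f_m'$ \emph{at the specific root} $\delta_0$. I expect the cleanest route is to avoid re-deriving a tight root estimate from scratch and instead observe that near $\delta = 0$ the two functions $\delta \mapsto f_m(\delta)$ and $\delta \mapsto r_2(m)/\delta$ intersect transversally with $f_m$ having small value and derivative ratio tightly controlled: since $f_m(\delta) = f_m(0) + \int_0^\delta f_m'$ and $f_m'$ varies by at most a factor $1 + O(\gamma)$ on $[-\gamma,\gamma]$ (because the nearest terms $n = m\pm 3, m\pm 8,\dots$ dominate and $\delta$ shifts them by only $O(\gamma)$, exactly as in \eqref{eq:upper-lower-fprime-bound} for the $3$d case), one gets $f_m(\delta_0) = (1+O(\gamma)) f_m'(\delta_0) \delta_0 + O(1/m)$; combined with $f_m(\delta_0) = r_2(m)/\delta_0$ and $r_2(m) \ge 1$, $|\delta_0| \le \gamma$, this rearranges directly to $f_m'(\delta_0) \le (1 + O(\gamma)) r_2(m)/\delta_0^2$. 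Verifying that $f_m'$ is nearly constant on $[-\gamma, \gamma]$ — i.e. that the window terms $n = m + j$ for small $j$ dominate and the tail $\sum_{|n-m| \text{ large}} r_2(n)/(n-\lambda)^2$ is comparably behaved for $\lambda$ ranging over $[m-\gamma, m+\gamma]$ — is the one place where a short estimate, rather than a one-line citation, is needed.
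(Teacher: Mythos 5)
Your approach is essentially the paper's: apply Lemma~\ref{lem:nearby-zeros} with $A=r_2(m)$, get a crude lower bound $B\ge r_2(m+3)/10 \ge r_2(m)/\gamma^2$ from the single term $n=m+3$, which gives $|\delta_0|\le\gamma$, and then upgrade to the $(1+O(\gamma))$ bound by observing that $f_m'$ is nearly constant on $[-\gamma,\gamma]$.

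One slip worth flagging: you assert $|f_m(0)| = |m/(m^2+1)| \le 1/m$, but $f_m(0) = H_m(m) - \frac{m}{m^2+1}$, and the $H_m(m)$ part is a full sum over the spectrum and is not small (and not obviously even $O(1)$). Fortunately this bound is never needed. Lemma~\ref{lem:nearby-zeros} hands you $|\delta_0|\le\sqrt{A/B}$ with $B=\min_{|\delta|\le\gamma}f_m'(\delta)$ \emph{irrespective} of the size of $f(0)$ (the quadratic $B\delta^2 + f(0)\delta - A = 0$ always has a root of absolute value at most $\sqrt{A/B}$ because the product of roots is $-A/B$). Combining that with the uniform comparison
$f_m'(\delta)= H_m'(m+\delta) = (1+O(\gamma))\,H_m'(m+\delta_0)$
for $|\delta|\le\gamma$ (each summand $\frac{r_2(n)}{(n-m-\delta)^2}$ moves by a multiplicative $1+O(\gamma)$ when $\delta$ ranges over $[-\gamma,\gamma]$, since $|n-m|\ge 1$), you get $B = (1+O(\gamma))H_m'(\lambda)$; squaring $|\delta_0|\le\sqrt{A/B}$ then yields \eqref{eq:good-stuff} directly. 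So the route you sketch in the \emph{Main obstacle} paragraph is the right one, provided you drop the unneeded and incorrect $|f_m(0)|=O(1/m)$ claim and instead simply square the $\sqrt{A/B}$ estimate after installing the near-constancy of $f_m'$.
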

\begin{proof}
  Given $m \in M_{\gamma}$ we start by  finding (at
  least one) nearby new eigenvalue.  To do so,  rewrite the eigenvalue equation
  (i.e., \eqref{eq:evalue} in the weak coupling limit, or
  \eqref{eq:evalue_strong} in the strong coupling limit) 
$$
\sum_{n \in S} r_{2}(n)
\left(
\frac{1}{n-\lambda} -\frac{n}{n^{2}+1}
\right)
= F(\lambda),
$$
as
$$
r_{2}(m)
\left( \frac{1}{m-\lambda} - \frac{m}{m^{2}+1} \right)
 + H_{m}(\lambda)   = 0
$$
Thus, with $\lambda = m+\delta$, and defining
$f(\delta) = H_{m}(m+\delta)-r_{2}(m)\frac{m}{m^{2}+1}$, we wish
to find (small) 
solutions to
$$
f(\delta) = \frac{r_{2}(m)}{\delta}
$$
Now, by \eqref{eq:def-H-F}, $f'(\delta)$ is always a sum of positive
terms, hence we may drop all terms but one, say the one corresponding
to $k=m+3$ (recall that $m  = n^{2}+1$, hence $k = n^{2}+4$ is a sum of two
squares), and find that 
\begin{equation*}
f'(\delta) 
\geq \frac{r_{2}(m+3)}{((m+3)-(m+\delta))^{2}} 
=\frac{r_{2}(m+3)}{(3-\delta)^{2}} 
\geq \frac{r_{2}(m+3)}{10}
\end{equation*}
for $|\delta| \leq 1/10$.   By Lemma \ref{lem:nearby-zeros}, there exists $\delta_{0}$
such that
$$
f(\delta_{0}) = \frac{r_{2}(m)}{\delta_{0}}
$$
and 
$$
|\delta_{0}| \leq \sqrt{10r_{2}(m)/r_{2}(m+3)} \leq \gamma
$$

Using the above estimate on $\delta$ we next
show that the lower bound on $f'(\delta)$ is essentially given by the
size of
$H'_{m}(m)$.  Since $m_{-} \leq m-1$ and $m_{+} 
\geq m+1$, we find that
$$
\frac{1}{(n-(m+\delta))^{2}} =
\frac{1 + O(\gamma)}{(n-m)^{2}} 
$$
holds for all $n \in \mathcal{N}_{2}\setminus\{m\}$ and $|\delta| \leq
\gamma$.  Thus, 
$$
\min_{|\delta| \leq \gamma} f'(\delta) =\min_{|\delta|\leq\gamma} H_{m}'(m+\delta) = H_{m}'(m)(1+O(\gamma))
$$
for $|\delta| \leq \gamma$.
Hence we may take $A=r_{2}(m)$ and $B =
H_{m}'(m+\delta_{0})(1+O(\gamma))$ in Lemma \ref{lem:nearby-zeros}; on
squaring the estimate $\delta_0 \leq \sqrt{A/B}$ we find that
$$
\delta_{0}^{2} H_{m}'(m+\delta_{0}) \leq (1+O(\gamma)) r_{2}(m)
$$
(for $\gamma$ small.)
In particular,  $\lambda = m + \delta_{0}$ is a new
eigenvalue, and
$$
H_{m}'(\lambda)
\leq
\frac{r_{2}(m)}{(m-\lambda)^{2}}\cdot (1+O(\gamma))
$$
\end{proof}
\begin{rem}
\label{rem:h-generalization}
  The above argument in fact gives the following: if $r_{2}(m+h) \geq
  R \cdot r_{2}(m)>0$ for some $0<h<H$, 
  then (again for $|\delta|< 1/10$),
$$
f'(\delta) \gg \frac{R \cdot r_{2}(m)}{H^{2}}
$$
and thus there exists a nearby new eigenvalue $\lambda = m+\delta_0$
with $|\delta_{0}| \ll H/\sqrt{R}$, and
$$
H_{m}'(\lambda) \ll \frac{ r_{2}(m)}{(m-\lambda)^{2}}.
$$

\end{rem}

For $\gamma\in(0,1/10)$ let $\mathcal{M}_{\gamma}$ be the set given by Lemma
\ref{lem:almost-prime-elts-in-S}, and $\Lambda_{\gamma}$ be the set of
corresponding new eigenvalues given by Proposition
\ref{prop:2dim-bdd-norm}.  By restricting to a subsequence we may 
assume that for any $n\in\mathcal{M}_{\gamma}$, the sets 
\[
\noXi(n):=
\left\{
\frac{\xi}{|\xi|}:|\xi|^{2}=n
\right\}
\]
{\em converge} to a limit set $\noXi(\infty)$ of bounded cardinality.

It is now  straightforward to exhibit scarring in momentum space.
\subsection{Scarring in momentum space}
\label{sec:scars-momentum-part}
In this section we prove the first part of Theorem \ref{thm:2dscar_momentum}.
For any fixed {\em positive} $f\in
C^{\infty}(\mathbb{S}^{1})$ we show that there exists a positive constant $0<c\leq1$ such that
  \begin{equation}
    \label{eq:scar_mmnt_2dim}
   \lim_{\lambda\in\Lambda'} \langle
   \Op(f)g_{\lambda},g_{\lambda}\rangle\geq \frac{c}{|\noXi|}\sum_{\xi\in\noXi}f(\xi) 
  \end{equation}
 Let
$$W_{f}(n):=\displaystyle\sum_{|\xi|^{2}=n}f
\left(
\frac{\xi}{|\xi|}
\right).$$ 
We start with an upper bound on the $L^{2}$ norm of $G_{\lambda}$: By
definition of $G_{\lambda}$, its $L^{2}$ norm is (recall that $\lambda
= m +\delta$  where 
$|\delta| < \gamma$, and $\gamma < 1/10$)
\begin{multline}
  \|G_{\lambda}\|^{2}=\sum_{n\in\mathcal{N}_{2}} \frac{r_{2}(n)}{|n-\lambda|^{2}} = 
\frac{r_{2}(m)}{|m-\lambda|^{2}} + 
\sum_{\substack{n\in\mathcal{N}_{2}\\ n \neq m}}
\frac{r_{2}(n)}{(n-\lambda)^{2}}=\\
\frac{r_{2}(m)}{|m-\lambda|^{2}}+H_{m}'(m+\delta)+\sum_{n\not\in
  I(\lambda)}\frac{r_{2}(n)}{|n-\lambda|^{2}}\leq\\ 
(2+O(\gamma))\frac{r_{2}(m)}{|m-\lambda|^{2}}
 + O(\frac{\lambda^\epsilon}{\lambda^{\eta}})
= 
(2+O(\gamma))\frac{r_{2}(m)}{|m-\lambda|^{2}} 
\end{multline}
where the last inequality follows from Proposition
\ref{prop:2dim-bdd-norm}, and that $r_{2}(n)\ll n^{\varepsilon}$.
Recalling that $f$ is positive, this implies that
\begin{multline}
  \label{eq:matrix-elements-momentum}
  \langle \Op(f)g_{\lambda},g_{\lambda}
  \rangle=\frac{\displaystyle\sum_{n\in\mathcal{N}_2}\frac{W_f(n)}{(n-\lambda)^2}}{\displaystyle\sum_{n\in\mathcal{N}_2}\frac{r_2(n)}{
      (n-\lambda)^2}}\geq\frac{\displaystyle\frac{W_{f}(m)}{
      (m-\lambda)^{2}}}{\displaystyle(2+O(\gamma))\frac{r_{2}(m)}{(m-\lambda)^{2}}}
  =\\
  \frac{1}{2+O(\gamma)} \cdot \frac{W_{f}(m)}{r_{2}(m)}
\to
\frac{1}{2+O(\gamma)} \cdot
\frac{1}{|\noXi(\infty)|}\sum_{\xi\in\noXi(\infty)}f(\xi). 
\end{multline}
By choosing $\gamma$ such that $2+O(\gamma)>0$, Theorem
\ref{thm:2dscar_momentum} is proved.

\begin{rem}
\label{rem:half-mass-singular}
We note that the above construction places mass at least
$1/2+O(\gamma)$ on the singular part.
\end{rem}

\subsection{Scarring in position space}
\label{sec:scars-position-side}
To simplify the notation, we use the following convention throughout
this section:  let $w := (0,2) \in \Z^2$, and for $\lambda$ a fixed
``new'' eigenvalue, and $v \in 
\Z^2$ define
$$
c(v) := c_{\lambda}(v) = \frac{1}{|v|^{2}-\lambda},
\quad \quad C(v,w) := c(v)c(v+w).
$$ 
By the definition of $G_{\lambda}$ and $\Op(a)$ we see that using the
new notation (cf. \eqref{def:greens function},
\eqref{eq:def-quantization-pos})
\[
\Op(e_{w})G_{\lambda}(x,x_{0})=\sum_{v\in\mathbb{Z}^{3}}c(v)e^{iv\cdot
  x_{0}}e^{i(v+w)\cdot x},
\]
and therefore 
\begin{equation}
  \label{eq:position-matrix-elements}
\langle \Op(e_{w}) g_{\lambda}, g_{\lambda}  \rangle=
\frac{ e^{-iw\cdot x_{0}} \cdot   \displaystyle\sum_{v \in \Z^2} 
c(v) c(v+w)
}{\displaystyle\sum_{m \in \mathcal{N}_{2}}
  \frac{r_{2}(m)}{(m-\lambda)^{2}}}
=\frac{\displaystyle e^{-iw\cdot x_{0}} \cdot \sum_{v \in \Z^2}  
C(v,w)
}{\displaystyle\sum_{m \in \mathcal{N}_{2}} \frac{r_{2}(m)}{(m-\lambda)^{2}}}
\end{equation}
As we aim to show that \eqref{eq:position-matrix-elements} is bounded
from below in absolute value, we may assume that $x_{0} = 0$.  We will  show
that the sum in the numerator is essentially bounded from below by two
terms in the sum, namely $v$ such $|v|=|v+w|$.

In what follows, $\gamma\in(0,1/10)$ is small (and to be determined
later), $m = n^{2}+1$ will always denote an element of
$\mathcal{M}_{\gamma}$ (recall that by construction, all elements of
$\mathcal{M}_{\gamma}$ are of this form), and given $n$ we define a vector $u
\in \Z^2$ by
$$
u := (n,-1).
$$

 For $\lambda\in\Lambda_{\gamma}$ let
$m\in\mathcal{M}_{\gamma}$ be the corresponding nearby integer (i.e.,
 $|m-\lambda|<\gamma$ by
Proposition \ref{prop:2dim-bdd-norm}),
set $R=\sqrt{\lambda}$, and let
$$
C_{m} := \{ v \in \R^2 : |v|^{2} = m \}
$$
denote the circle of radius $\sqrt{m}$ centered at the origin. Define
$$
A_{R}  = A_{R,w} := \{ v \in \R^2 : |v| \in [R-|w|,
R+|w|] \}
$$
as the annulus of width $2|w|$ containing $C_{m}$, and let
$$
A_{R}^{*}  = A_{R,w}^{*} := \{ v \in \R^2 : |v| \in [R-|w|,
R+|w|],|v|^{2}\neq m, |v+w|^{2}\neq m \}.
$$

The following Lemma will allow us to bound contribution of the
negative terms in the sum in the  numerator of the right hand side of
(\ref{eq:position-matrix-elements}).
\begin{lem}
  \label{lem:negative-are-far}
If $C(v,w)<0$ for $v\in\mathbb{Z}^{2}$, then
$|v|\in[R-|w|,R+|w|]$. Furthermore, if
we in
addition have $|\langle 
v,w\rangle|\leq 
\frac{\sqrt{R}}{2}$, then $v=(\pm 
n,y)$ with $-3\leq y\leq-1$ provided that $R$ is sufficiently large.
\end{lem}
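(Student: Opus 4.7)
The plan is to prove both parts directly from the identity
$$
|v+w|^{2} - |v|^{2} = 2\langle v,w\rangle + |w|^{2} = 4v_{2}+4,
$$
together with the fact that $\lambda$ lies within $\gamma < 1/10$ of the integer $m = n^{2}+1$.

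For the first assertion, note that $C(v,w) < 0$ forces $|v|^{2}-\lambda$ and $|v+w|^{2}-\lambda$ to have opposite signs, so that $|v|$ and $|v+w|$ lie on opposite sides of $R = \sqrt{\lambda}$. Combined with the reverse triangle inequality $\bigl||v|-|v+w|\bigr|\leq |w|$, this immediately puts both of them in the annulus $[R-|w|,R+|w|]$.

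For the second assertion I would proceed in two stages. First, under the hypothesis $|\langle v,w\rangle|\leq \sqrt{R}/2$, the identity above gives $\bigl| |v|^{2}-|v+w|^{2}\bigr|\leq \sqrt{R}+4$. Since both integers straddle $\lambda$, both lie within $\sqrt{R}+4$ of $\lambda$, and in particular within $\sqrt{R}+O(1)$ of $m = n^{2}+1$. Writing $v = (v_{1},v_{2})$ and using $v_{2}^{2}\leq R/16$, I can subtract $v_{2}^{2}$ to conclude that $v_{1}^{2}$ lies in a window of length $O(\sqrt{R})$ around $n^{2}$. For $R$ sufficiently large this window is shorter than the gap $2n+1$ between consecutive squares $n^{2}$ and $(n\pm1)^{2}$, so the only integer square it contains is $n^{2}$, forcing $v_{1}=\pm n$.

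Having pinned down $v_{1}=\pm n$, I would then use that $|v|^{2}=n^{2}+v_{2}^{2}$ and $|v+w|^{2}=n^{2}+(v_{2}+2)^{2}$ are integers straddling $\lambda \in (m-\gamma,m+\gamma) = (n^{2}+1-\gamma, n^{2}+1+\gamma)$. Since $\gamma<1/10$, the only integers strictly greater than $\lambda$ and smaller than $n^{2}+9$ are $m=n^{2}+1$ and $m+3=n^{2}+4$, and the only integers strictly less than $\lambda$ and at least $n^{2}$ are $n^{2}$ and $m=n^{2}+1$. A short case analysis then shows that the only values of $v_{2}$ for which $\{n^{2}+v_{2}^{2},\,n^{2}+(v_{2}+2)^{2}\}$ can straddle $\lambda$ are those with $v_{2}+2\in\{-1,0,1\}$, i.e.\ $v_{2}\in\{-3,-2,-1\}$ (the boundary value $v_{2}=-1$ producing $|v|^{2}=|v+w|^{2}=m$, which would give $C(v,w)\geq 0$ and can be excluded, but harmlessly fits inside the stated range).

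The only real obstacle is the integer-square spacing step: one has to verify quantitatively that $\sqrt{R}+O(1)+R/16$ is smaller than $2n-O(1)$ for $R$ large, which is routine since $R\asymp n$ and so the window is of size $O(n^{1/2})$ against a gap of size $\Theta(n)$. Everything else is bookkeeping with the identity $|v+w|^{2}-|v|^{2}=4v_{2}+4$ and the fact that $\lambda$ is trapped in an interval of length less than $1$ around $m$.
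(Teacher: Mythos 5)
Your first assertion and your derivation of $v_{1}=\pm n$ are correct, and the latter is a genuinely different argument from the paper's: the paper splits into the three cases $|v_{1}|\geq n+1$, $|v_{1}|\leq n-1$, $|v_{1}|=n$ and handles each directly, while you squeeze $v_{1}^{2}$ using the straddle condition and the bound on $v_{2}^{2}$. One caveat about the phrasing: the window you obtain for $v_{1}^{2}$ has length $O(\sqrt{R})$ but is centered at $n^{2}+1-v_{2}^{2}$, not at $n^{2}$, and since $v_{2}^{2}$ can be as large as $R/16\asymp n/16$ the center can drift quite far from $n^{2}$. What is actually needed, and what you do verify in your final quantitative check, is that the drift $R/16$ plus the half-length $O(\sqrt{R})$ stays below $2n-O(1)$, so that the whole window sits inside $((n-1)^{2},(n+1)^{2})$; saying the window is ``around $n^{2}$'' overstates the intermediate conclusion.

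The final step, however, has a real gap. You assert that the straddle condition on $\{n^{2}+v_{2}^{2},\ n^{2}+(v_{2}+2)^{2}\}$ forces $v_{2}+2\in\{-1,0,1\}$, but this drops the symmetric possibility $v_{2}\in\{-1,0,1\}$. For instance $v_{2}=0$ gives $|v|^{2}=n^{2}<\lambda<n^{2}+4=|v+w|^{2}$, so $C(v,w)<0$, yet $v_{2}=0$ is excluded by your case analysis. The full range is $-3\leq v_{2}\leq 1$ (more precisely $v_{2}\in\{-3,-2,0,1\}$, since $v_{2}=-1$ yields $|v|^{2}=|v+w|^{2}=m$ and hence $C>0$). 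Your answer happens to agree with the bound $-3\leq y\leq -1$ printed in the statement, but that bound appears to be a misprint: the paper's own proof excludes exactly $y\leq -4$ and $y\geq 2$, establishing $-3\leq y\leq 1$, and it is this wider range that is used in the definition of $V_{1}$ and in the sum $\sum_{y=-3}^{1}C((\pm n,y),w)$ in Corollary~\ref{cor:sum-over-negs}. So the coincidence with the printed statement should not be taken as confirmation that your case analysis is complete.
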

\begin{proof}
  Since $C(v,w) < 0$ if and only if the line segment joining $v$
  and $v+w$ intersects $C_{R}$, the first assertion follows from the
  triangle inequality. 

 We now write $v=(x,y)$ for $x,y \in \Z$.

{\em First case:}
If $|x|\geq n+1$, then
$$
|v|^{2}-\lambda \geq x^{2} - \lambda \geq (n+1)^{2}
-\lambda = n^{2} +2n+1-\lambda \geq m + 1 - \lambda \geq 1-\delta
$$
and similarly $|v+w|^{2}-\lambda \geq 1-\delta$. Recalling that
$|\delta|<\gamma \leq 1/10$ we find that $C(v,w) >0$.

{\em Second case:}
Assume that $|x| \leq n-1$.  We note that $C(v,w)<0$ implies that
either $|v|^{2} > \lambda$, or that $|v+w|^{2} > \lambda$.

Now, if $|v|^{2}>\lambda$, then 
$$
|v|^{2} = x^{2}+y^{2} > \lambda = m + \delta = n^{2}+1 + \delta
$$
so,
$$
y^{2} > n^{2}+1+\delta - (n-1)^{2} \geq n.
$$
Consequently, $|y| \geq \sqrt{n} > \sqrt{R}/2$, hence $|\langle v,w
\rangle| = 2|y| > \sqrt{R}$ and the claim is vacuous.

On the other hand,  if $|v+w|^2>\lambda$ then, as $x^{2} \leq
(n-1)^{2}$, 
$$
|v+w|^2=x^2+(y+2)^2>\lambda=m+\delta=n^2+1+\delta
$$
so $|y|\geq\sqrt{n}$, and as before $|\langle
v,w\rangle|>\sqrt{R}/2$; again the claim is vacuous.

{\em Third case:}
For $|x| = n$, since $|\delta|=|\lambda-m| < 1/10$ we find that
$$
|v|^{2} -\lambda = n^{2} + y^{2} -\lambda = m -\lambda +
y^{2}-1  = -\delta + y^{2}-1,
$$
and
$$
|v+w|^{2} -\lambda = n^{2} + (y+2)^{2} -\lambda 
= - \delta + (y+2)^{2}-1 
$$ 
so both $c(v),c(v+w)$ are positive for $v = (\pm n,y)$ if  $y
\leq -4$ or $y 
\geq 2$.
\end{proof}
In light of Lemma \ref{lem:negative-are-far}, we consider the
following three sets of points $v\in\mathbb{Z}^{2}$:
\begin{eqnarray*}
&&V_{1}:=\left\{
v\in\mathbb{Z}^{2}: v=(\pm n,y), -3\leq y\leq1
\right\}  \\
&&V_{2}:= \left\{
v\in\mathbb{Z}^{2}: C(v,w)<0, \substack{|v|^2,|v+w|^2\neq  m,\\ 
\sqrt{R}/2\leq|\langle v,w\rangle|\leq 3R} 
  \right\}\\
&&V_{3}:=
\left\{
v\in\mathbb{Z}^{2}:C(v,w)<0,\substack{|v|^2=m\mbox{ or }|v+w|^2=m,\\
  \sqrt{R}/2\leq|\langle v,w\rangle|\leq 3R}
\right\}.
\end{eqnarray*}

Notice that these three sets, for $R$ sufficiently large, cover all
$v\in\mathbb{Z}^{2}$ such that 
$C(v,w)<0$, because if $C(v,w)<0$, then $|\langle v,w\rangle|\leq
(R+|w|)|w|<3R$ for $R>4$. 
\begin{figure}[h]
\ifanswers
\includegraphics[height=6.cm,width=11.5cm]{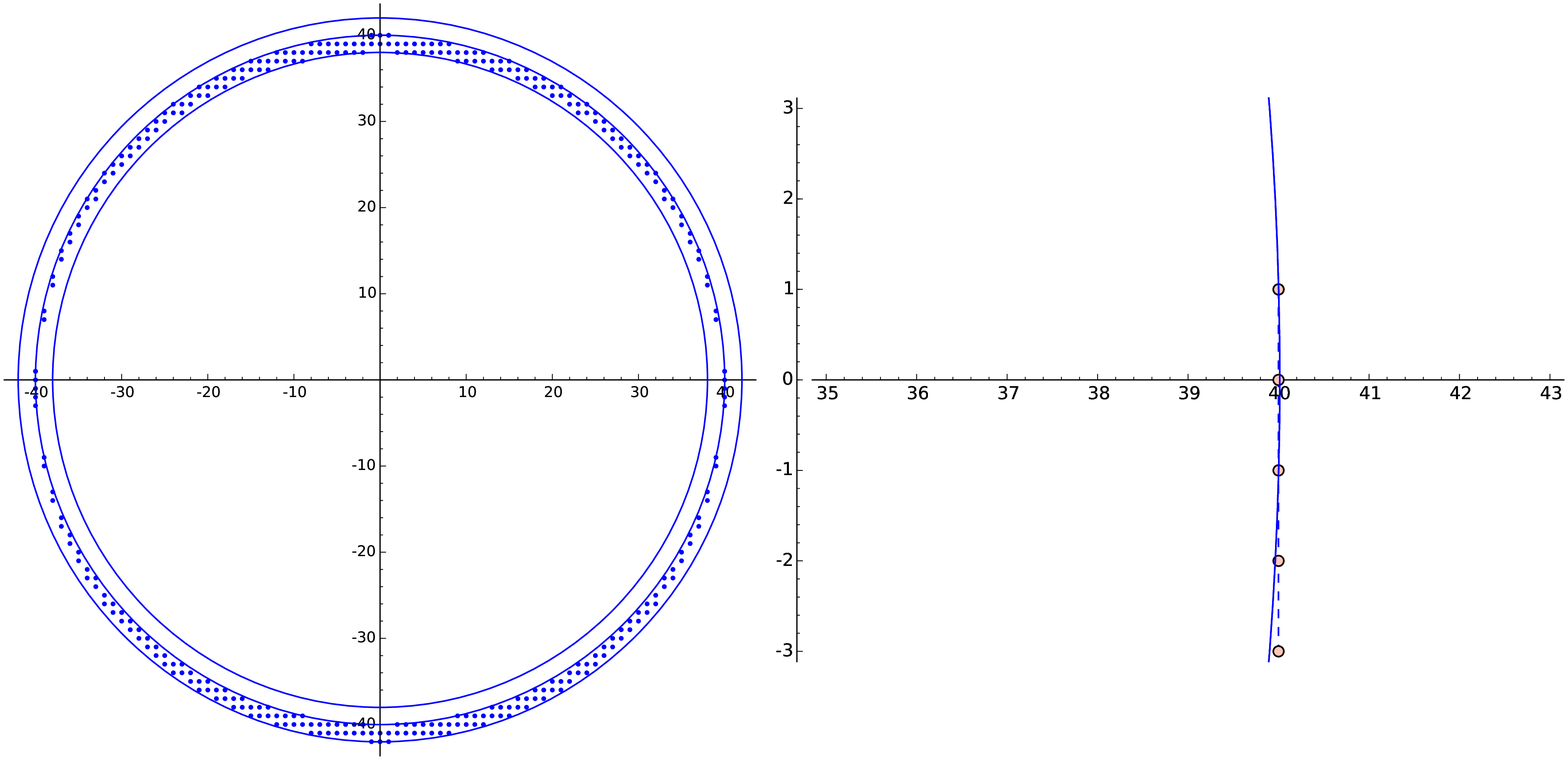}
\fi
\caption{An illustration of setting of Lemma
  \ref{lem:negative-are-far}. For  $m=40^2+1$ and (say)
  $\lambda=0.1$, {\bf only} the lattice points with $C(v,w)<0$ are
  plotted. On the right plot we zoomed around the point (40,0). Notice
  that the only points near $(40,\pm 1)$ lie on the line
  $x=40$.} 
\label{fig:negative-are-far} 
\end{figure}
For ease of notation, we make the following
definitions: For a finite set $X\subset\mathbb{R}$, define
$\operatorname{argmin}_{X}(|x|)$ as the smallest $x \in X$ which
minimizes $|x|$,
and for  $v\in\mathbb{Z}^2$ and $w$ a ``short vector'' as
before, let
\begin{eqnarray*}
\operatorname{Nbr}_w(v)&:=&\{C(v-w,w)+C(v,w),C(v,w)+C(v+w,w)\}\\
 S_w(v)&:=&\mbox{argmin}_{\operatorname{Nbr}_w(v)}(|x|).
\end{eqnarray*}
\begin{cor}
\label{cor:sum-over-negs}
For any $\delta\in (-1/10,1/10)$,  as $R\to\infty$
\begin{equation}
  \label{eq:sum-over-negs1}
    \sum_{v\in\mathbb{Z}^{2}}C(v,w)\geq
  \frac{2}{\delta^{2}}+O(\frac{1}{\delta})+\sum_{v\in V_{2}} S_{w}(v)
\end{equation}
\end{cor}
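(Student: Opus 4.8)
The plan is to work directly from the numerator of \eqref{eq:position-matrix-elements}, namely $\sum_{v\in\Z^2}C(v,w)$, and to use that by Lemma~\ref{lem:negative-are-far} together with the discussion preceding this corollary, every $v$ with $C(v,w)<0$ lies in $V_1\cup V_2\cup V_3$. For $R$ large these three sets are pairwise disjoint (points of $V_1$ satisfy $|\langle v,w\rangle|\le 6$, while $V_2$ and $V_3$ require $|\langle v,w\rangle|\ge\sqrt R/2$, and $V_2\cap V_3=\emptyset$ by definition), and every $v\notin V_1\cup V_2\cup V_3$ has $C(v,w)\ge 0$. So after writing $\sum_{v\in\Z^2}C(v,w)=\sum_{V_1}C(v,w)+\sum_{V_2}C(v,w)+\sum_{V_3}C(v,w)+\sum_{\mathrm{rest}}C(v,w)$ with the last sum over the remaining (nonnegative) terms, the task reduces to controlling the first three pieces and spending a little of the fourth.

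First I would dispose of $V_1$ and $V_3$ by direct computation. Writing $\lambda=m+\delta$ with $m=n^2+1$ and recalling $w=(0,2)$, one has $c((\pm n,y))=1/(y^2-1-\delta)$ and $c((\pm n,y)+w)=1/((y+2)^2-1-\delta)$; evaluating for $-3\le y\le 1$ shows the two points $\pm u=(\pm n,-1)$ (for which $|v|^2=|v+w|^2=m$, so $c(v)=c(v+w)=-1/\delta$) contribute exactly $C(u,w)+C(-u,w)=2/\delta^2$, while the remaining eight points of $V_1$ contribute $O(1/\delta)$ (the four with $y=-3$ or $y=1$ give terms of size $\asymp 1/\delta$, the four with $y=-2$ or $y=0$ give $O(1)$). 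For $V_3$, each point $v$ has one coordinate factor of $C(v,w)$ equal to $\mp 1/\delta$ and the other equal to $1/(|v\pm w|^2-\lambda)$ with $\bigl||v\pm w|^2-\lambda\bigr|\asymp|\langle v,w\rangle|\gg\sqrt R$; since $|V_3|\le 2r_2(m)$ is bounded (as $m\in\mathcal M_\gamma$, by Lemma~\ref{lem:almost-prime-elts-in-S}), this gives $\sum_{V_3}C(v,w)=O\bigl(1/(|\delta|\sqrt R)\bigr)=o(1)$, which is absorbed into the $O(1/\delta)$ term.

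The substantive point is $V_2$, which I would handle columnwise in the direction $w$. For $v\in V_2$ the segment $[v,v+w]$ crosses the circle $|z|=R$; the constraint $|\langle v,w\rangle|\ge\sqrt R/2$ keeps $v$ far from the vertex of the parabola $k\mapsto|v+kw|^2$ (equivalently, far from the two points of $C_R$ where a line parallel to $w$ is tangent), so this parabola is monotone on $\{-1,0,1,2\}+k$ and consequently neither $[v-w,v]$ nor $[v+w,v+2w]$ meets $C_R$; hence $C(v-w,w)\ge 0$ and $C(v+w,w)\ge 0$. Moreover no two points of $V_2$ share one of these neighbours: each transversal crossing of $C_R$ by a column produces exactly one negative term, and the two possible $V_2$-points in a single column lie near opposite crossings of $C_R$, hence are $\gg\sqrt R$ apart — in particular not at distance $2|w|$. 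Now, by definition $S_w(v)$ equals $C(v-w,w)+C(v,w)$ or $C(v,w)+C(v+w,w)$, so $S_w(v)-C(v,w)$ equals the corresponding $C(v\pm w,w)\ge 0$; this neighbour has $C\ge 0$, hence lies outside $V_1\cup V_2\cup V_3$, i.e.\ among the ``rest'' terms, and the neighbours chosen for distinct $v\in V_2$ are distinct. Therefore $\sum_{v\in V_2}\bigl(S_w(v)-C(v,w)\bigr)\le\sum_{\mathrm{rest}}C(v,w)$.

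Combining, and discarding any further nonnegative terms,
\begin{multline*}
\sum_{v\in\Z^2}C(v,w)\ \ge\ \sum_{v\in V_1}C(v,w)+\sum_{v\in V_3}C(v,w)+\sum_{v\in V_2}S_w(v)\\
=\ \frac{2}{\delta^2}+O(1/\delta)+\sum_{v\in V_2}S_w(v),
\end{multline*}
which is \eqref{eq:sum-over-negs1}. I expect the only genuinely delicate step to be the transversality and neighbour-disjointness claim for $V_2$ (controlling that the circle crosses each column at worst once near each $V_2$-point and that the columnwise neighbours used in $S_w$ are genuinely positive and pairwise distinct); once that is established the rest is bookkeeping with the finitely many ``large'' terms supported near $C_m$.
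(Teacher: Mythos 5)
Your proof follows the same route as the paper: split the negative contributions into $V_1$, $V_2$, $V_3$, compute $V_1$ and $V_3$ directly, and for each $v\in V_2$ replace $C(v,w)$ by $S_w(v)$ by ``borrowing'' a positive neighbour from the remaining sum. The paper handles this borrowing tersely (``after removing only positive terms''), and you are right that what must be checked is that the borrowed neighbours are nonnegative, lie outside $V_1\cup V_2\cup V_3$, and are pairwise distinct; making this explicit is worthwhile. Two small imprecisions in your accounting, neither fatal. First, you infer ``$C\ge 0$, hence lies outside $V_1\cup V_2\cup V_3$'', but $V_1$ is the fixed ten-point set $\{(\pm n,y):-3\le y\le1\}$ and is not characterised by a sign condition on $C$; the neighbour $v\pm w$ lies outside $V_1$ for the separate reason you already noted, namely $|\langle v\pm w,w\rangle|\ge\sqrt R/2-4>6$. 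Second, a transversal crossing of the circle $|z|^2=\lambda$ by a column produces \emph{two} negative terms, at consecutive integer heights $y_1$ and $y_1+1$ (the zero of $y\mapsto x^2+y^2-\lambda$ then lies in $(y_1+1,y_1+2)$), not one; the borrowed neighbours still do not collide because the candidate height sets $\{y_1-2,y_1+2\}$ and $\{y_1-1,y_1+3\}$ are disjoint from one another and from the pair near the opposite crossing, which is $\gg\sqrt R$ away. With those two points repaired the argument agrees with the paper's.
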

\begin{proof}
We first notice that if $R$ is large enough, only
one sign change can occur for $|v+tw|^{2}-\lambda$ when $t \in \R$ is bounded,
so if $C(v,w)<0$ then both $C(v-w,w),C(v+w,w)>0$. Also, as mentioned
above, by Lemma \ref{lem:negative-are-far}, if $C(v,w)<0$ then $v$ is
in either $V_{1}, V_{2}$ or
$V_{3}$ for $R$ large. Therefore, after removing only positive terms,
we find that
  \begin{multline}
\sum_{v\in\mathbb{Z}^{2}}C(v,w)\geq \sum_{v\in V_{1}}C(v,w)+\sum_{v\in
  V_{3}}C(v,w)+\sum_{v\in V_{2}}S_{w}(v) =\\2\sum_{y=-3}^{1}C((n,y),w)+\sum_{v\in
  V_{3}}C(v,w)+\sum_{v\in V_{2}}S_{w}(v)
\end{multline} 
Now, by definition of $C(v,w)$, for the first sum we have that
\begin{multline}
  \sum_{y=-3}^{1}C((\pm n,y),w)=\\
\frac{1}{(-\delta)^{2}}+2
\left(
\frac{1}{(3-\delta)(-1-\delta)}-\frac{1}{(8-\delta)\delta}
\right)=\frac{1}{\delta^{2}}+O
\left(
\frac{1}{\delta}
\right).
\end{multline}
(Note that when $\delta \to 0$, the dominant term $1/\delta^{2}$
comes from the term $y=-1$.)

For $v\in V_{3}$ and $|v|^2=m$, we have $|\langle v,w\rangle|\gg \sqrt{R}$,
and so (recall that $m-\lambda = -\delta$)  
\[
||v+w|^{2}-\lambda|=||v|^{2}+2\langle v,w\rangle+|w|^{2}-\lambda|=|2\langle v,w\rangle+|w|^2-\delta|\gg
\sqrt{R}.
\]
and $C(v,w)\ll\frac{1}{\delta \sqrt{R}}$. Using a similar argument we get that $C(v,w)\ll\frac{1}{\delta \sqrt{R}}$ if $|v+w|^2=m$. Therefore, since $r_{2}(m)$ is bounded,
\[
\sum_{v\in V_{3} }C(v,w)\ll\sum_{v\in V_{3}}\frac{1}{\delta
  \sqrt{R}}\ll \frac{1}{\delta \sqrt{R}} 
\]
and \eqref{eq:sum-over-negs1} follows. 
\end{proof}
\begin{figure}[h]
\ifanswers
\begin{pspicture}(0,2)(8.5,8.2)
\psarc[linestyle=dotted](1,5){4}{-35}{35}
\psarc(1,5){5}{-35}{35}
\psarc[linestyle=dotted](1,5){6}{-35}{35}
\psline(4.5,5)(7.5,5)
\rput(4.6,8){$|v|^2=\lambda$}
\rput(5.8,2){\psdot\pnode{A1}}
\rput(5.8,3){\psdot\pnode{A2}}
\rput(5.8,4){\psdot\pnode{A3}}\rput[l](6,4){$u=(n,-1)$}
\rput(5.8,5){\psdot\pnode{A4}}
\rput(5.8,6){\psdot\pnode{A5}}
\rput(5.8,7){\psdot\pnode{A6}}
\rput(5.8,8){\psdot\pnode{A7}}
\ncarc[linestyle=dashed,arcangle=30]{A1}{A3}\pnode(5.5,3){B1}
\ncarc[linestyle=dashed,arcangle=30]{A2}{A4}\pnode(5.5,4){B2}
\ncarc[linestyle=dashed,arcangle=330]{A3}{A5}
\ncarc[linestyle=dashed,arcangle=30]{A4}{A6}\pnode(5.5,6){B4}
\ncarc[linestyle=dashed,arcangle=30]{A5}{A7}\pnode(5.5,7){B5}
\pnode(6,4.7){B3}\cnodeput[linestyle=none](8,4.7){C2}{$C(u,w)=\frac{1}{\delta^2}$}
\rput(2,4.7){\rnode{C1}{$\sum C(v,w)=O(\frac{1}{\delta})$}}
\ncarc{B3}{C2}
\ncarc{B1}{C1}\ncarc{B2}{C1}\ncarc[arcangle=330]{B4}{C1}\ncarc[arcangle=330]{B5}{C1}
\end{pspicture}
\fi
\label{fig:pos-main-contr}
\caption{The main contribution in Corollary \ref{cor:sum-over-negs} is
seen here. $C(u,w)=c(u)c(u+w)=\frac{1}{\delta^{2}}$, and all other
points $v$ with $C(v,w)<0$ contribute $O(\frac{1}{\delta})$.}
\end{figure}
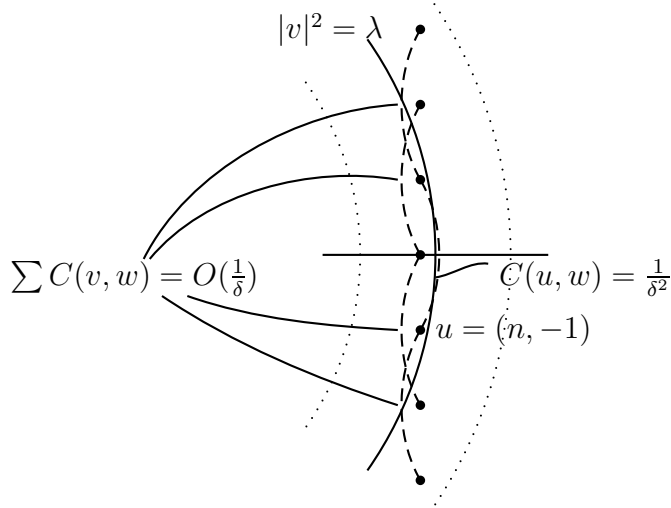

The contribution from the remaining, more subtle, terms are treated
by  ``pairing off'' negative summands with positive ones, and thereby
getting some extra savings.
\begin{lem}
  \label{lem:nearby-terms-cancel}
  Let $v \in V_{2}$ be an element such that $|\langle v,  w \rangle|
  \geq R^{1/3}$. Then
$S_{v}(w)\ll\frac{\log^{2} |\langle v, w \rangle|}{|\langle v, w
  \rangle|^{2}}$, as $R \to \infty$.
\end{lem}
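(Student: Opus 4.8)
The plan is to cancel the negative term $C(v,w)$ against one of its two lattice neighbours $C(v-w,w)$ or $C(v+w,w)$ and to extract the resulting saving. First I would put the situation in a normal form. Since $C(v,w)<0$ exactly when the segment $[v,v+w]$ meets the circle $|x|^{2}=\lambda$, precisely one of $c(v),c(v+w)$ is negative; and the substitution $v\mapsto v+w,\ w\mapsto -w$ leaves $\operatorname{Nbr}_{w}(v)$ — hence $S_{w}(v)$ — unchanged (a one line check from the definitions, using $C(x,-w)=c(x)c(x-w)$; it changes $|\langle v,w\rangle|$ by at most $4$). So I may assume $a:=|v|^{2}-\lambda<0<s:=|v+w|^{2}-\lambda$. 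Writing $k:=\langle v,w\rangle$ we then have $s=a+2k+4$ and $-2k-4<a<0$, and the hypothesis $|\langle v,w\rangle|\geq R^{1/3}$ makes $k$ large. Two facts follow at once: (i) $|v|^{2},|v+w|^{2}\in\mathcal N_{2}\setminus\{m\}$ differ from $m$ by a nonzero integer while $|\lambda-m|<1/10$, so $|a|,s\geq 9/10$; and (ii) the outer denominators $|v-w|^{2}-\lambda=a-2k+4=s-4k$ and $|v+2w|^{2}-\lambda=a+4k+16=s+2k+12$ have absolute value $\asymp k$ (immediate from $-2k-4<a<0$). In particular $c(v-w),c(v),c(v+w),c(v+2w)$ have signs $-,-,+,+$, so, as in the proof of Corollary~\ref{cor:sum-over-negs}, $C(v-w,w)$ and $C(v+w,w)$ are positive while $C(v,w)<0$.

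Next I would compute the two candidate values. Combining fractions gives
\[
|C(v-w,w)+C(v,w)|=\frac{2|a+4|}{|a|\,s\,(4k-s)},\qquad
|C(v,w)+C(v+w,w)|=\frac{2(s+4)}{|a|\,s\,(s+2k+12)},
\]
and $S_{w}(v)$ is whichever of these two is smaller in absolute value. Now I would split on the size of $s$, noting $0<s<2k+4$ (so $s=O(k)$ automatically). If $s\geq k$, use the first candidate: $4k-s>2k-4\gg k$ and $|a+4|/|a|\leq 1+4/|a|\leq 49/9$ since $|a|\geq 9/10$, so it is $\ll 1/(sk)\ll 1/k^{2}$. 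If $s<k$, use the second: then $|a|=2k+4-s>k$, $s+2k+12\gg k$, and $(s+4)/s\leq 1+4/s\leq 49/9$ since $s\geq 9/10$, so it is $\ll 1/k^{2}$. Either way $S_{w}(v)\ll 1/k^{2}=1/\langle v,w\rangle^{2}$, which is in fact a shade stronger than the asserted bound $\log^{2}|\langle v,w\rangle|/|\langle v,w\rangle|^{2}$; undoing the normal-form change of $|\langle v,w\rangle|$ by $O(1)$ is harmless.

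The only step needing genuine care is the normal-form reduction together with fact (ii): one must confirm that all four denominators $a$, $s$, $a-2k+4$, $a+4k+16$ are controlled — equivalently, that the outer points $v-w$ and $v+2w$ do not lie on $|x|^{2}=m$, which would make the corresponding $c(\cdot)$ blow up. Once $k$ is large this is automatic from $-2k-4<a<0$ (here the integrality of the coordinates and the evenness of $w=(0,2)$ are what force it), and everything else is elementary manipulation of the displayed rational functions; so I do not anticipate a substantive obstacle beyond this bookkeeping.
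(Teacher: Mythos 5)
Your proof is correct and rests on the same pairing/cancellation mechanism as the paper's proof: pick one of the two neighbor sums $C(v\mp w,w)+C(v,w)$, combine the fractions, and exploit that the resulting ``outer'' denominator is large. The normalization $(v,w)\mapsto(v+w,-w)$, which you correctly check preserves $\operatorname{Nbr}_w(v)$ (hence $S_w(v)$) and perturbs $|\langle v,w\rangle|$ only by $O(1)$, is a nice touch that collapses the sign cases. The key arithmetic input in both arguments is the same: once $a<0<s<2k+4$ is established, the outer denominators $s-4k$ and $s+2k+12$ are automatically $\asymp k$, and $|a|,s\geq 9/10$ since $|v|^2,|v+w|^2\in\mathcal N_2\setminus\{m\}$ while $|\lambda-m|<1/10$.

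Where you genuinely streamline: the paper first splits on which of $||v|^2-\lambda|$, $||v+w|^2-\lambda|$ is smaller, then splits again at the threshold $B/\log B$ (handling the ``far'' sub-case by bounding $|C(v,w)|$ directly), and it is this threshold that produces the $\log^2$ factors in the stated bound. Your split on $s\geq k$ versus $s<k$ covers all of $V_2$ with the pairing alone, so you get the cleaner $S_w(v)\ll |\langle v,w\rangle|^{-2}$ with no logarithmic loss. Either bound is more than enough for Corollary~\ref{cor:small-neg-contr}, so this is an improvement in presentation rather than in what is needed downstream, but it is a real simplification.
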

\begin{proof}
For notational convenience, we put
$B := B_{v} = |\langle v, w \rangle|$.
We split the proof into two cases.  

{\em First case:} Here we assume that $||v|^{2}-\lambda| \leq
||v+w|^{2}-\lambda|$. 
If $||v|^{2}-\lambda| \geq B/\log B$, then
$$
|c(v) c(v+w)| \leq \frac{\log^{2}B }{ B^{2}}.
$$
We may therefore assume that $||v|^{2}-\lambda| \leq B/ \log B.$
Now, 
$$
|v+w|^{2}-\lambda = 
|v|^{2} + 2 \langle v, w \rangle +|w|^{2} - \lambda
= 
|v|^{2} -\lambda \pm 2 B +|w|^{2}
$$
and similarly $|v-w|^{2}-\lambda = |v|^{2} -\lambda - (\pm 2 B)
+|w|^{2}$, hence
\begin{multline}
\label{eq:two-term-bound}
C(v,w) + C(v-w,w) = c(v) \cdot c(v+w) + c(v-w) \cdot c(v) =
\\=
\frac{1}{|v|^{2}-\lambda}
\left(
\frac{2(|v|^{2} + |w|^{2}-\lambda)  }
{(|v|^{2} + 2B+ |w|^{2}-\lambda) (|v|^{2} - 2B+ |w|^{2}-\lambda)}
\right).
\end{multline}
(note that the two $\pm 2B$ terms above occur with {\em opposite} signs).
Recalling the assumption  $|v|^{2} \neq m$, together with
$|w|^{2} =4$, we find that
$||v|^{2}-\lambda| \geq 1/2$ (note that $|\lambda-m| \leq\delta\leq
1/10$ by our assumption on $\delta$), and this together with 
\eqref{eq:two-term-bound} shows that 
$$
C(v,w) + C(v-w,w)
\ll
\frac{1}
{(|v|^{2} + 2B+ |w|^{2}-\lambda) (|v|^{2} - 2B+ |w|^{2}-\lambda)}.
$$
Since we assume that $||v|^{2}-\lambda| \leq B/\log B$, we find that
$$
| |v|^{2} \pm 2B+ |w|^{2}-\lambda| \gg B
$$
and thus $C(v,w) + C(v-w,w)\ll 1/B^{2}$.

{\em Second case.}  Here we assume that $||v|^{2}-\lambda| >
||v+w|^{2}-\lambda| $.  This case follows by a similar argument,
except for showing that
$$
|c(v+w)( c(v) + c(v+2w))| \ll  \frac{\log^{2}B}{B^{2}}.
$$

\end{proof}  
\begin{figure}
\ifanswers
\begin{pspicture}(3,5)(10,8)
\psset{unit=1.2}
\psarc[linestyle=dotted](5,0){4}{65}{115}
\psarc(5,0){5}{65}{115}
\rput(2.3,4.3){$|v|^2=\lambda$}
\psarc[linestyle=dotted](5,0){6}{65}{115}
\dotnode(5.6,4.85){A1}\rput[l](5.8,4.8){$v$}
\dotnode(5.6,4.45){A2}\rput[l](5.8,4.45){$v-w$}
\dotnode(5.6,5.25){A3}\rput[l](5.8,5.25){$v+w$}
\psline(5.6,4.45)(5.6,5.25)
\dotnode(4.5,5.05){A1}\rput[r](4.3,5){$v+w$}
\dotnode(4.5,4.65){A2}\rput[r](4.3,4.6){$v$}
\dotnode(4.5,5.45){A3}\rput[r](4.3,5.4){$v+2w$}
\psline(4.5,4.65)(4.5,5.45)
\pnode(4.51,4.85){B1}
\pnode(4.51,5.25){B2}
\pnode(5.59,4.65){C1}
\pnode(5.59,4.95){C2}
\rput(6,4){$C(v,w)+C(v-w,w)\ll\frac{\log^2B}{B^2}$}
\pnode(6,4.15){C0}\pnode(4.5,4.15){C3}
\ncarc{->}{C1}{C0}\ncarc[arcangle=350]{->}{C2}{C3}
\rput[l](3.1,6.5){\rnode{B0}{$C(v+w,w)+C(v,w)\ll\frac{\log^2B}{B^2}$}}
\pnode(6,6.25){B0}\pnode(4.58,6.20){B3}
\ncarc{->}{B1}{B0}\ncarc[arcangle=300]{->}{B2}{B3}
\end{pspicture}
\fi
\label{fig:pos-scar-negl-cont}
\vspace{.5cm}
\caption{An illustration of Lemma \ref{lem:nearby-terms-cancel}. For
  $v\in V_{2}$, at least one of the terms $C(v,w)+C(v+w,w)$ or
  $C(v,w)+C(v-w,w)$is $\ll\frac{\log^{2}B}{B^{2}}$ }
\end{figure}
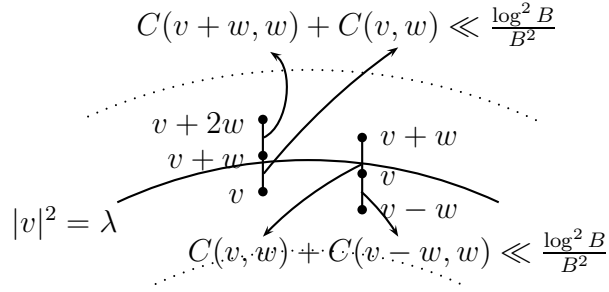
\begin{cor}
\label{cor:small-neg-contr}
As $R\to\infty$,  we have
$$\displaystyle \sum_{v\in V_{2}}S_{w}(v)= \sum_{\substack{v\in A_{R}^{*}\\
      C(v,w)<0\\| \langle v,w\rangle|\in[R^{1/3},3R]}}
  S_{w}(v)=o(1).
$$
\end{cor}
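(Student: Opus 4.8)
\emph{Plan.} The proof has two parts: first show that the two displayed sums are literally equal once $R$ is large, and then bound their common value by feeding Lemma~\ref{lem:nearby-terms-cancel} into a crude count of the relevant lattice points.

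\emph{Step 1: the two sums agree.} If $C(v,w)<0$ then Lemma~\ref{lem:negative-are-far} gives $|v|\in[R-|w|,R+|w|]$, i.e.\ $v\in A_{R}$; together with the conditions $|v|^{2},|v+w|^{2}\neq m$ built into the definition of $V_{2}$ this forces $v\in A_{R}^{*}$, so that $V_{2}=\{v\in A_{R}^{*}:C(v,w)<0,\ \sqrt R/2\le|\langle v,w\rangle|\le 3R\}$. To see that replacing the inner cutoff $\sqrt R/2$ by $R^{1/3}$ changes nothing, apply the second assertion of Lemma~\ref{lem:negative-are-far}: any $v$ with $C(v,w)<0$ and $|\langle v,w\rangle|\le\sqrt R/2$ must be of the form $(\pm n,y)$ with $-3\le y\le-1$, hence satisfies $|\langle v,w\rangle|=2|y|\le 6<R^{1/3}$ for $R$ large. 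Thus for large $R$ there is no $v\in A_{R}^{*}$ with $C(v,w)<0$ and $R^{1/3}\le|\langle v,w\rangle|<\sqrt R/2$, and the two sums coincide.

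\emph{Step 2: the bound.} Every $v$ surviving in the sum lies in $V_{2}$ and has $|\langle v,w\rangle|\ge R^{1/3}$, so Lemma~\ref{lem:nearby-terms-cancel} gives $|S_{w}(v)|\ll\log^{2}|\langle v,w\rangle|/|\langle v,w\rangle|^{2}$; since $w=(0,2)$ we have $\langle v,w\rangle=2v_{2}$, so this is $\ll\log^{2}|v_{2}|/v_{2}^{2}$. I would then group the terms by $v_{1}$: the inequality $C((v_{1},v_{2}),w)<0$ means the vertical segment from $(v_{1},v_{2})$ to $(v_{1},v_{2}+2)$ meets the circle $|z|^{2}=\lambda$, which forces $v_{1}^{2}<\lambda$ and leaves only $O(1)$ values of $v_{2}$, each obeying $\bigl|\,|v_{2}|-\sqrt{\lambda-v_{1}^{2}}\,\bigr|\le 2$; in particular $v_{2}^{2}\asymp\lambda-v_{1}^{2}$ except for the finitely many $v_{1}$ (namely $v_{1}=\pm n$) with $\lambda-v_{1}^{2}=O(1)$, which are excluded since for them $|\langle v,w\rangle|=O(1)<R^{1/3}$. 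This reduces the sum to $\ll(\log\lambda)^{2}\sum_{v_{1}:\ \lambda-v_{1}^{2}\gg R^{2/3}}(\lambda-v_{1}^{2})^{-1}$. Writing $\lambda=n^{2}+1+\delta$ (with $n\asymp R$) and $v_{1}=n-j$, $1\le j\le n$, one has $\lambda-v_{1}^{2}=(1+\delta)+j(2n-j)\gg jn\gg jR$, so the inner sum is $\ll R^{-1}\sum_{j\le n}1/j\ll\log R/R$, and altogether the quantity is $\ll(\log R)^{3}/R=o(1)$.

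\emph{Main obstacle.} The only point needing care is Step~1: recognising that the cutoff $|\langle v,w\rangle|\ge R^{1/3}$ is exactly what removes the exceptional points $v=(\pm n,y)$ with $-3\le y\le-1$, on which $|v|^{2}-\lambda$ is only $O(\delta)$ and which therefore carry the dominant $\sim 1/\delta^{2}$ contribution already isolated in Corollary~\ref{cor:sum-over-negs}. Once these are excised, every remaining term has $|v_{2}|$ bounded below by a fixed power of $R$, and the rest is routine bookkeeping.
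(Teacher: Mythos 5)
Your proof is correct, and it takes a genuinely different route from the paper's. The paper bounds the sum by dyadically decomposing in $B=|\langle v,w\rangle|$: for each dyadic block $[2^{k},2^{k+1})\subset[R^{1/3},R/\log R]$ it counts lattice points in the annulus with $\langle v,w\rangle$ in that range (getting $O(2^{k}|w|)$ points) and multiplies by the Lemma~\ref{lem:nearby-terms-cancel} bound $\ll k^{2}/2^{2k}$, then handles the tail $|\langle v,w\rangle|\in[R/\log R,3R]$ with the crude count $O(R)$, arriving at $O(R^{-1/3+\varepsilon})+O(\log^{4}R/R)$. You instead parameterize by $v_{1}$, observing that for each $v_{1}$ with $v_{1}^{2}<\lambda$ there are $O(1)$ values of $v_{2}$ with $C(v,w)<0$ and that $v_{2}^{2}\asymp\lambda-v_{1}^{2}$, which after the substitution $v_{1}=n-j$ collapses the whole sum into $\ll(\log R)^{2}\sum_{j}1/(j(2n-j))\ll(\log R)^{3}/R$. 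Your computation is more explicit and actually produces a slightly sharper bound (saving a full power of $R^{2/3}$ over the paper's), at the cost of relying on the specific form $\lambda=n^{2}+1+\delta$ (which is guaranteed by the construction of $\mathcal{M}_{\gamma}$, so this is legitimate). You also spell out why the two displayed sums are identical --- that the exceptional points of Lemma~\ref{lem:negative-are-far} with $|\langle v,w\rangle|<\sqrt{R}/2$ in fact have $|\langle v,w\rangle|\le 6$ and so fall outside $[R^{1/3},3R]$ --- a point the paper leaves implicit.

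One small remark: in Step~2 you write ``Every $v$ surviving in the sum lies in $V_{2}$ and has $|\langle v,w\rangle|\ge R^{1/3}$''; the second condition is already implied by $v\in V_{2}$ for large $R$, so this phrasing is slightly redundant, but it does no harm.
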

\begin{proof}
Write
\begin{multline}
\sum_{\substack{v\in A_{R}^{*}\\
      C(v,w)<0\\| \langle v,w\rangle|\in[R^{1/3},3R]}}
  S_{w}(v)=\\
\sum_{\substack{k\in\mathbb{N}\\R^{1/3}\leq 2^{k}<R/\log R}}\sum_{\substack{v\in A_{R}^{*}\\
      C(v,w)<0\\| \langle v,w\rangle|\in[2^{k},2^{k+1})}}S_{w}(v)+\sum_{\substack{v\in A_{R}^{*}\\
      C(v,w)<0\\| \langle v,w\rangle|\in[R/\log R,3R]}}S_{w}(v)
\end{multline}
Since the number of lattice points in $A_{R}$ satisfying $\langle
v,w\rangle\in I$ is $O(|I||w|)$ for any interval $I \subset [-R/\log R,
R/\log R]$, we get by Lemma
\ref{lem:nearby-terms-cancel} that
\[
\sum_{\substack{k\in\mathbb{N}\\R^{1/3}\leq 2^{k}<R/\log R}}\sum_{\substack{v\in A_{R}^{*}\\
      C(v,w)<0\\| \langle
      v,w\rangle|\in[2^{k},2^{k+1})}}S_{w}(v)\ll\sum_{2^{k}\geq
    R^{1/3}}2^{k}
  \left(
\frac{k}{2^{k}}
  \right)^{2}\ll\frac{1}{R^{1/3-\varepsilon}}=o(1) 
\]
and
\[
\sum_{\substack{v\in A_{R}^{*}\\
      C(v,w)<0\\| \langle v,w\rangle|\in[R/\log R,3R]}}S_{w}(v)\ll
 R \frac{ \log^{2}R}{(R/\log R)^{2}} =  
\frac{ \log^{4}R}{R} = o(1)
\]
\end{proof}
\subsubsection{Conclusion}
\label{sec:conclusion}
We can now conclude the proof of Theorem \ref{thm:2dscar_momentum} by
proving that for $f(x)=e^{\pi i \langle x, w \rangle}\in C^\infty(\mathbb{T}^{2})$,
  \begin{equation}
    \label{eq:scar_pos_2dim}
    \lim_{\lambda\in\Lambda'}\langle \Op(f)g_{\lambda},g_{\lambda}\rangle>0
  \end{equation}
 By \eqref{eq:position-matrix-elements} (recall that we 
may assume that
 $x_0 =0$),
\[
\langle \\Op(e_{w}) g_{\lambda}, g_{\lambda}  \rangle=
\frac{\displaystyle\sum_{v \in \Z^2} C(v,w)
}{\displaystyle\sum_{m \in \mathcal{N}_{2}} \frac{r_{2}(m)}{(m-\lambda)^{2}}}
\]
By corollaries \ref{cor:sum-over-negs} and \ref{cor:small-neg-contr}
\begin{equation}
  \sum_{v \in \Z^2} 
C(v,w)\geq
\frac{2}{\delta^{2}}+O(\frac{1}{\delta})
\end{equation}
On the other hand, by Proposition \ref{prop:2dim-bdd-norm}
\[
\displaystyle\sum_{n \in \mathcal{N}_{2}} \frac{r_{2}(n)}{(n-\lambda)^{2}}=\frac{r_{2}(m)}{(m-\lambda)^{2}}+H_{m}'(\lambda)\leq(2+O(\gamma))\frac{r_{2}(m)}{(m-\lambda)^{2}}
\]
and, recalling that  $r_{2}(m)$ is bounded, we can choose
$\gamma$ small enough  (recall that $|m-\lambda|=|\delta|\leq\gamma$)  so  that
\[
\displaystyle{\langle \Op(e_{w}) g_{\lambda}, g_{\lambda}
\rangle\geq\frac{\displaystyle\frac{2}{\delta^{2}}+
  O(\frac{1}{\delta})}
{\displaystyle(2+O(\gamma))\frac{r_{2}(m)}{\delta^{2}}}}
=
\frac{2+O(\delta)}{(2+O(\gamma))r_{2}(m)} 
=
\frac{2+O(\gamma)}{(2+O(\gamma))r_{2}(m)} 
\]
is {\em uniformly} bounded from below.

\section{Proof of
  Theorem~\ref{thm:large-density-scars}}
\label{sec:version-2}
Recall first the setting proved in Proposition \ref{prop:r2-boost}:
There exist an integer $H \geq 1$ with the property that for
all sufficiently large $R$ there exist an integer $h  \in (0,H^{2})$ such that
\begin{multline*}
|\{ n \in \mathcal{N}_2 : n \leq x, \quad 0 < r_{2}(n) \ll R^{H},
\quad r_{2}(n+h) \geq R \cdot r_{2}(n)  \}|
\\
\gg_{R} x/(\log x)^{H}
\end{multline*}
as $x \to \infty$.

As noted in Remark~\ref{rem:h-generalization}, if $r_{2}(m+h) \geq R
\cdot r_{2}(m) > 0$ for some integer $h$ such that $0<h<H$, then there
exists a new eigenvalue $\lambda = m + \delta_{0}$ with
$$
\delta_{0} \ll 2H/\sqrt{R}, 
\quad \quad
H_{m}'(\lambda) \ll \frac{ r_{2}(m)}{(m-\lambda)^{2}}.
$$
The argument in Section~\ref{sec:scars-momentum-part} then shows that
$\lambda$ gives rise to a momentum scar provided $r_2(m)$ is also
bounded.  Proposition~\ref{prop:r2-boost} then gives, upon choosing
$R$ sufficently large, that the number of such $m \leq x$ is $\gg
x/(\log x)^{H}$.

\bibliography{scarbib}
\bibliographystyle{abbrv}

\end{document}
